\newtheoremstyle{mythm}{3pt}{3pt}{}{16pt}{\bfseries}{:}{.5em}{}
\theoremstyle{mythm}
\newtheorem{theorem}{Theorem}
\newtheorem{example}{Example}
\newtheorem{definition}{Definition}
\newtheorem{remark}{Remark}
\newtheorem{lemma}{Lemma}
\begin{document}
	\title{Multiple-antenna Placement Delivery Array   for Cache-aided MISO Systems}
\author{Ting Yang,  Kai~Wan,~\IEEEmembership{Member,~IEEE,} Minquan Cheng
and~Giuseppe~Caire,~\IEEEmembership{Fellow,~IEEE}
\thanks{T. Yang and M. Cheng are with Guangxi Key Lab of Multi-source Information Mining $\&$ Security, Guangxi Normal University,
Guilin 541004, China  (e-mail: yt\_yang\_ting@163.com, chengqinshi@hotmail.com). }
\thanks{K. Wan and G. Caire are with the Electrical Engineering and Computer Science Department, Technische Universit\"{a}t Berlin,
10587 Berlin, Germany (e-mail: kai.wan@tu-berlin.de, caire@tu-berlin.de).  The work of K.~Wan and G.~Caire was partially funded by the European Research Council under the ERC Advanced Grant N. 789190, CARENET.}
}
	\maketitle

\begin{abstract}
We consider the cache-aided multiple-input single-output (MISO) broadcast channel, which consists of a server with $L$ antennas and $K$ single-antenna users, where the server contains $N$ files of equal length and each user is equipped with a local cache of size $M$ files. Each user requests an arbitrary file from library. The objective is to design a coded caching scheme based on uncoded placement and one-shot linear delivery, to achieve the maximum sum Degree-of-Freedom (sum-DoF) with low subpacketization.  It was shown in the literature that under the constraint of uncoded placement and one-shot linear delivery, the optimal sum-DoF is  $L+\frac{KM}{N}$. However, previously proposed schemes for this setting incurred either an exponential subpacketization order in $K$, or required specific conditions in the system parameters $L$, $K$, $M$ and $N$.	In this paper, we propose a new combinatorial structure called multiple-antenna placement delivery array (MAPDA). Based on MAPDA and Latin square, the first proposed scheme achieves the optimal sum-DoF $L+\frac{KM}{N}$ with the subpacketization of $K$ when  $\frac{KM}{N}+L=K$. Subsequently, for the general case we propose a transformation approach to construct an MAPDA from any $g$-regular PDA (a class of PDA where each integer in the array occurs $g$ times) for the original shared-link coded caching problem. When the original PDA corresponds to the Maddah-Ali and Niesen coded caching scheme, the resulting scheme under the combinatorial  structure of MAPDA can achieve the optimal sum-DoF $L+\frac{KM}{N}$  with reduced subpacketization  with respect to the existing schemes. The work can be extended to the multiple independent single-antenna transmitters (servers) corresponding to the cache-aided interference channel proposed by Naderializadeh et al. and the scenario of transmitters equipped with multiple antennas.
		\begin{IEEEkeywords}
			Coded caching, MISO, Multiple-antenna placement delivery array.
		\end{IEEEkeywords}
	\end{abstract}

\section{Introduction}
	Demands on wireless video are growing at an exponential rate in recent years, which causes the traffic congestion over the wireless channels.
	Efficient video delivery over the air becomes a very important problem.
	An efficient solution to  alleviate networks from content-related traffic is content caching, which pre-stores some library content at user devices during off-peak traffic hour; thus the stored content will not be further transmitted.
	Coded caching was originally proposed by   Maddah-Ali and Niesen (MN) in \cite{MN} for the single-input single-output (SISO) shared-link network model, where
	a central server with access to a library containing $N$ files is connected to $K$ cache-aided users through an error-free shared-link and each user has a cache to store at most $M$ files.
	By the transmission of multicast messages and the use of cached content in order to remove interference, coded caching leads to a coded caching gain  in addition to the conventional uncoded caching gain.
	A coded caching scenario contains two phases, namely {\it placement} and {\it delivery}. 	
	In the placement phase, each user stores some packets of each file without knowledge of users' later demands. In the delivery phase,  each user requests one file in the library. According to users' caches and demands, the server broadcasts coded packets to the users such that all demands are satisfied.
	For each  $M = \frac{N t}{K}$ where $t\in \{0,1,\ldots,K-1\}$,
	each coded packet transmitted in the delivery phase is simultaneously useful to $ \frac{KM}{N}+1$ users, where $ \frac{KM}{N}+1$ represents the achieved coded caching gain.
	The subpacketization of a coded caching scheme refers to the number of subfiles per file. Although information theoretic achievability and converse results are typically studied in the limit of very large file size, such that	the subpacketization does not represent a limitation, in practice for finite size files, it is important to design schemes with low subpacketization. The subpacketization of the MN scheme
	grows exponentially with $K$.
	To reduce the subpacketization, various combinatorial subfile assignments have been proposed in the literature, such as the authors in \cite{YCTC} proposed a combination structure  referred to as  placement delivery array (PDA).
	This is an array whose entries are integers and a special symbol $*$ (star), where the positions of the stars indicate which subfiles are cached and the integers indicate which subfiles are jointly encoded into the multicast messages (see later for a formal definition).
	The  MN coded caching scheme can also be represented as a PDA, referred to as MN PDA. Remarkably, various schemes based on PDA proposed in \cite{CJYT,CJWY,CWZW,WCWC} have lower subpacketization than the MN scheme. A PDA is called $g$-regular if each integer appears $g$ times in the array. In the literature the problem of reducing the subpacketization of coded caching has been widely studied and several other combinatorial constructions have been proposed, such as
 	the linear block codes \cite{TR}, the special $(6,3)$-free hypergraphs \cite{SZG}, the $(r,t)$ Ruzsa-Szem\'{e}redi graphs \cite{STD}, the strong edge coloring of bipartite graphs \cite{YTCC}, the projective space \cite{KP} and other combination design \cite{CLTW}.

	Following the seminal works of MN,  coded caching was applied to a variety of network topologies, such as    Device-to-Device (D2D) networks~\cite{JCM}, hierarchical   networks~\cite{KNMD,JWTCEL},  arbitrary multi-server linear networks~\cite{SSB}, etc.	
	Coded caching was   extended to   the wireless interference channel  with multiple single-antenna cache-aided transmitters and receivers~\cite{NMA,HND}, whose objective is to maximize the system sum Degree-of-Freedom (sum-DoF).
	If each transmitter is able to cache the whole library, the problem  reduces  to the cache-aided multiple-input single-output (MISO) broadcast channel (BC) with $L$ antennas  studied in \cite{SCH,EP,SPSET,MB,ST,STSK,PJC}.
	With    one-shot linear coding schemes based on  the joint design of coded caching and zero-forcing (ZF) precoding,  the sum-DoF $L+ \frac{KM}{N}$ was achieved in~\cite{NMA,SCH},
	which yields the MN coded caching gain for $L = 1$. It was proved in~\cite{EBPresolving} that under the constraints of uncoded cache placement and one-shot linear delivery,  the sum-DoF $L+ \frac{KM}{N}$ is optimal.
	A problem of  the cache-aided MISO BC schemes 	in~\cite{NMA,SCH} is that it requires subpacketization
	$\binom{K}{KM/N} \binom{K-KM/N-1}{L-1}$, which is even larger than the MN scheme.
	Various works have health with this subpacketization issue  \cite{EP,SPSET,MB,ST,STSK}.
	For the case where $\frac{K}{L}$ and $\frac{KM}{N}/L$  are both integers, the scheme of \cite{EP} achieves the sum-DoF $L+\frac{KM}{N}$ with subpacketization $\binom{K/L}{KM/(NL)}$.
	Under the constraint  $L\geq \frac{KM}{N}$, the authors in~\cite{SPSET}  utilize a  cyclic cache placement to achieve the sum-DoF  $L+ \frac{KM}{N}$ with subpacketization linear with $K$.
    A summary of the different schemes proposed in the literature, achieving the  sum-DoF $L+ \frac{KM}{N}$ subject to conditions and the corresponding subpacketizations is given in Table~\ref{tab-compare}.
	One should note that for the general case, no
	existing scheme can achieve the same sum-DoF of~\cite{NMA,SCH}
	with generally lower subpacketization.
	
	\paragraph*{Our Contributions}	
	This paper considers the cache-aided MISO BC  problem with one-shot linear delivery in~\cite{NMA}. We extend the PDA structure for the shared-link model in \cite{YCTC} to the MISO BC by using ZF, and propose a novel MISO BC coded caching structure, referred to as multiple-antenna placement delivery array (MAPDA). The MAPDA which is a construction structure for the cache-aided MISO BC problem based on uncoded cache placement and one-shot linear delivery,
	  generalizes the  one-shot linear coding constructions in~\cite{NMA,EP,SPSET,MB,STSK}. 
	We then propose two MAPDA constructions with lower subpacketization than the existing schemes while achieving the maximum sum-DoF $L+ \frac{KM}{N}$:
	\begin{itemize}
		\item For the case where $\frac{KM}{N}+L=K$, we propose an MAPDA construction based on the cyclic cache placement and Latin square, which achieves the maximum sum-DoF with subpacketization equals to $K$.
		\item We provide a non-trivial transformation approach to extend   any given $K_1$-user regular $g$-PDA for shared-link caching model to an $mK_1$-user MAPDA for the cache-aided MISO BC problem, where $m$ is any positive integer and $m\leq L$. The achieved sum-DoF is  $L+m(g-1)$ and the needed subpacketization is linear with $F_1$ which represents the subpacketization of the original regular PDA.
		In addition, by setting the original PDA as an MN PDA, the resulting scheme achieves the maximum sum-DoF  $L+ \frac{KM}{N}$ with a much lower subpacketization
		than~\cite{NMA,SCH}.  Interestingly, it can also cover the caching scheme in~\cite{EP} as a special case, i.e., when $m=L$.
	\end{itemize}
	
	\paragraph*{Paper Organization}	
	The rest of this paper is organized as follows. Section~\ref{sec-model} describes the system model. Section \ref{sec-MAPDA} reviews PDA and introduces the structure of MAPDA. Section \ref{sec-const} proposes two constructions of MAPDA and the gives analysis performance.
	Section \ref{sec-conclu} concludes the paper and some proofs can be found in Section \ref{proof-regular} and Appendices.

	\begin{table}  	
	\caption{Existing schemes with the sum-DoF  $L+ \frac{KM}{N}$, $L$ antennas and memory ratio of $\frac{t}{K}$, for $t\in [K]$.}
	\centering
	\begin{spacing}{1.5}
		\begin{tabular}{|m{3.5cm}<{\centering}|m{4cm}<{\centering}|m{4cm}<{\centering}|}	\hline
			Scheme & Limitation & Subpacketization \\	
			\hline Scheme in \cite{NMA}& No limitations  & $\binom{K}{t}\frac{t!(K-t-1)!}{(K-t-L)!}$\\
				
			\hline Scheme in \cite{SCH}& No limitations & $\binom{K}{t}\binom{K-t-1}{L-1}$\\
				
			\hline Scheme in \cite{EP}& $\frac{K}{L},\frac{t}{L}\in\mathbb Z^+$ & $\binom{K/L}{t/L}$\\
				
			\hline Scheme in \cite{SPSET}&$t\leq L$ & $\frac{K(t+L)}{(gcd(K,t,L))^2}$\\
				
			\hline Scheme in \cite{MB}& $\frac{t+L}{t+1}\in\mathbb{Z}^+$ &$\binom{K}{t}$\\
			\hline	
		\end{tabular}
		\label{tab-compare}
	\end{spacing}
	\end{table}
	
	\paragraph*{Notations}
	\begin{itemize}
		\item $[a:b]=\{a,a+1,\cdots,b\}$ and $[a]=\{1,2,\cdots,a\}$.
		
		\item For any array $\mathbf{P}$ composed of $m$ rows and $n$ columns, the $\mathbf{P}(i,j)$ denotes the entry in the $i^\text{th}$ row and $j^{\text{th}}$	column.	$(i,j)$ is also referred to as the position of $\mathbf{P}(i,j)$ in $\mathbf{P}$.
		\item Given an array $\mathbf{P}$ composed of the symbol $``*"$ and $S$ integers, for any integer $a$, the new array  $\mathbf{P}+a$ denotes integer $a$ plus each entry in $\mathbf{P}$, where $*+a=*$.
		\item $\text{gcd}(a,b)$ represents the greatest common divisor of integers $a$ and $b$.
	\end{itemize}

	\section{System model}\label{sec-model}
	\begin{figure}
		\centering
		\includegraphics[width=6in]{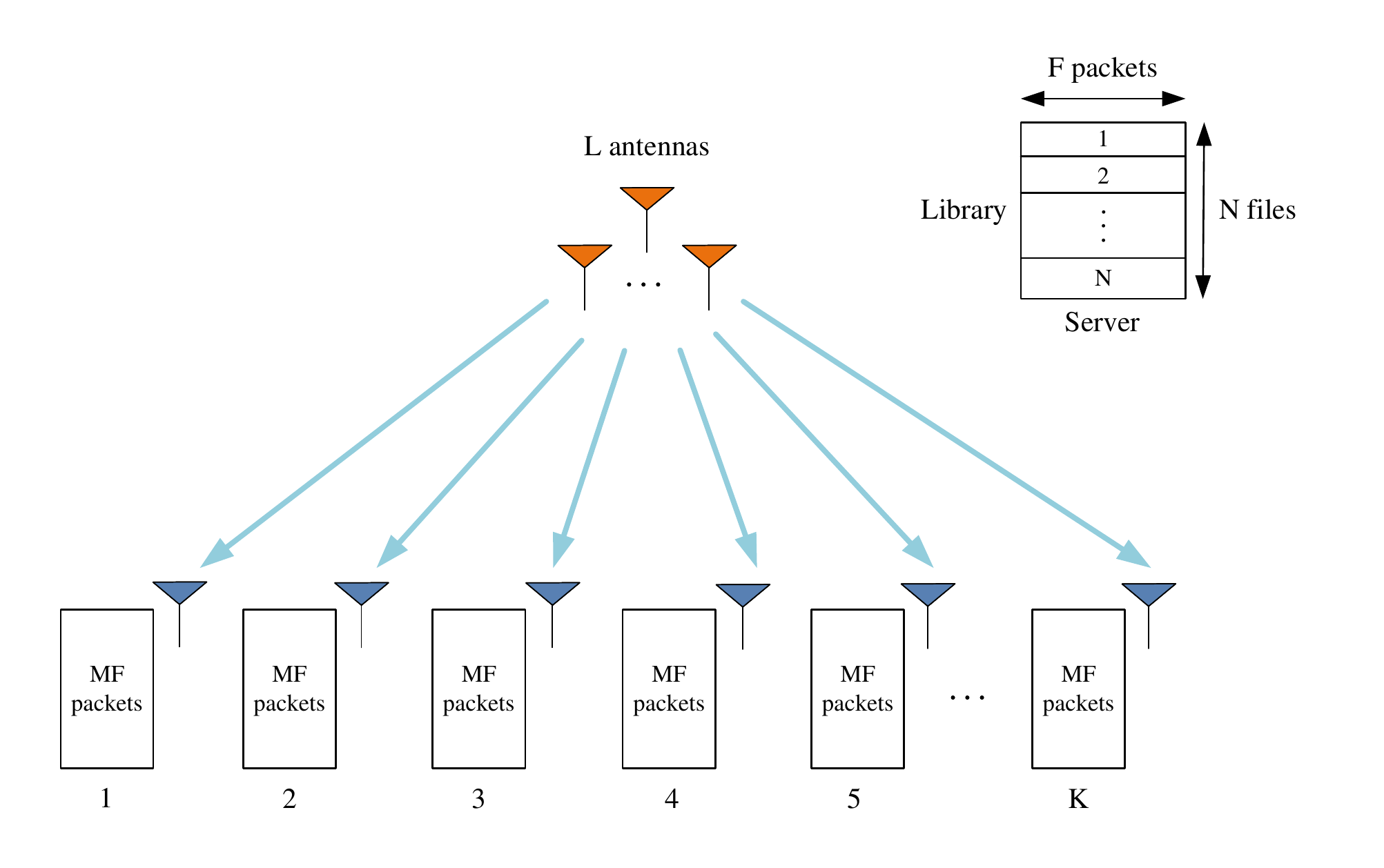}
		\caption{Multiple-input-single-output broadcast channel (MISO) BC system of one server with $L$ antennas and $K$ users with single antenna.}
		\label{fig-model}
	\end{figure}
	This paper considers the $(L,K,M,N)$ cache-aided multiple-input single-output (MISO) broadcast problem with one-shot linear delivery considered in \cite{NMA}, as illustrated in Fig. \ref{fig-model}. A server with $L$ antennas has access to the library containing $N$  files, denoted by $\mathcal{W}=\{\mathbf{W}_n\ |\ n\in [N]\}$.
	Each file $\mathbf{W}_n$ in the library consists of $F$ packets  $\mathbf{W}_n\triangleq \{\mathbf{W}_{n,f}\ |\ f\in[F]\}$, where each packet denoted by $\mathbf{W}_{n,f}\in \mathbb{F}_2^B$ contains $B$ uniformly i.i.d. bits.
	Each user is equipped with one antenna and a cache of $MF$ packets, where $0\leq M\leq N$.
	
	The communication process at time slot $t$ between the server and users can be modelled as
	\begin{equation}
		Y_k(t)=\sum_{i=1}^{L}h_{k,i}X_i(t)+\epsilon_k(t), \label{channel model}
	\end{equation}
	where $X_i(t)\in\mathbb{C}$ denotes the signal sent by antenna $i\in[L]$ which satisfies the power constraint $\mathbb{E}\big[\sum_{i\in[L]}|X_i(t)|^2\big]\leq P$. $Y_k(t)$ denotes the signal received by user $k\in[K]$. $h_{k,i}\in\mathbb{C}$ denotes the channel gain between antenna $i$ and user $k$, which is assumed to remain unchanged in the whole communication process and perfectly known to the server and all users. $\epsilon_k(t)\sim \mathcal{CN}(0,1)$ represents the noise of receiver $k$ at time slot $t$.
	
	A coded caching scenario contains two phases.
	\paragraph*{Placement phase}
	Each user $k\in[K]$ is able to store $MF$ packets from the library, denoted by $\mathcal{Z}_k$, without knowledge of later demands.
	
	\paragraph*{Delivery phase}
	Each user $k\in[K]$ requests an arbitrary file $\mathbf{W}_{d_k}$ where $d_k\in[N]$ from the library. We define $\mathbf{d}\triangleq (d_1,d_2,\ldots,d_K)$ as the demand vector.
	According to the users' demands and caches, the server transmits coded packets through $L$ antennas. More precisely, the server first   uses a code for the Gaussian channel with rate $B/\tilde{B}=\text{log}P+o(\text{log}P)$ (bit per complex symbol), to encode each packet as $\tilde{\mathbf{W}}_{n,f}\triangleq\psi(\mathbf{W}_{n,f})$.	
	By assuming  $P$ is large enough, it can be seen that each coded packet carries one Degree-of-Freedom (DoF). The whole communication process contains $S$ blocks, each of which consists of $\tilde{B}$  complex  symbols (i.e.,  $\tilde{B}$ time slots).
	For each block $s\in[S]$, the server delivers a subset of requested packets, denoted by $\mathcal{D}_s=\{\mathbf{\tilde{W}}_{d_{\mathcal{R}_{s,1}},f_{\mathcal{R}_{s,1}}} \mathbf{\tilde{W}}_{d_{\mathcal{R}_{s,2}},f_{\mathcal{R}_{s,2}}}, \ldots, \mathbf{\tilde{W}}_{d_{\mathcal{R}_{s,r_s}},f_{\mathcal{R}_{s,r_s}}}\}$ to a subset of users denoted by $\mathcal{R}_s=\{\mathcal{R}_{s,1},\mathcal{R}_{s,2},\ldots,\mathcal{R}_{s,r_s}\}$, where $|\mathcal{R}_s|=r_s$.
	In this paper, we only consider linear coding schemes in the delivery phase. Thus
	each antenna $i\in[L]$   sends the following linear combination from $\mathcal{D}_s$ to the users of $\mathcal{R}_s$ in block $s$, given by
	\begin{equation}\label{T-sig}
		\mathbf{x}_i(s)=\sum_{k\in\mathcal{R}_s}v_{i,k}^{(s)}\mathbf{\tilde{W}}_{d_{k},f_k},
	\end{equation}
	where each $v_{i,k}^{(s)}$ is a scalar complex coefficient.
	
	The transmission   by all antennas  in block $s$ could be written as
	{\small
	\begin{eqnarray}
		\label{eq-coding-caching}
		\mathbf{X}(s)=
		\left(
		\begin{array}{c}
			\mathbf{x}_{1}(s)\\
			\mathbf{x}_{2}(s)\\
			\vdots\\
			\mathbf{x}_{L}(s)
		\end{array}
		\right):=\mathbf{V}^{(s)}		\left(
		\begin{array}{c}
			\mathbf{\tilde{W}}_{d_{\mathcal{R}_{s,1}},f_{\mathcal{R}_{s,1}}}\\
			\mathbf{\tilde{W}}_{d_{\mathcal{R}_{s,2}},f_{\mathcal{R}_{s,2}}}\\
			\vdots\\
			\mathbf{\tilde{W}}_{d_{\mathcal{R}_{s,r_s}},f_{\mathcal{R}_{s,r_s}}}
		\end{array}
		\right)
		:=
		\left(
		\begin{array}{cccc}
			v_{1,1}^{(s)}&v_{1,2}^{(s)}&\cdots&v_{1,r_s}^{(s)}\\
			v_{2,1}^{(s)}&v_{2,2}^{(s)}&\cdots&v_{2,r_s}^{(s)}\\
			\vdots&\vdots&\ddots&\vdots\\
			v_{L,1}^{(s)}&v_{L,2}^{(s)}&\cdots &v_{L,r_s}^{(s)}
		\end{array}
		\right)
		\left(
		\begin{array}{c}
			\mathbf{\tilde{W}}_{d_{\mathcal{R}_{s,1}},f_{\mathcal{R}_{s,1}}}\\
			\mathbf{\tilde{W}}_{d_{\mathcal{R}_{s,2}},f_{\mathcal{R}_{s,2}}}\\
			\vdots\\
			\mathbf{\tilde{W}}_{d_{\mathcal{R}_{s,r_s}},f_{\mathcal{R}_{s,r_s}}}
		\end{array}
		\right).
	\end{eqnarray}}
	The signal received by each user  $k\in \mathcal{R}_s$ at block $s$ is denoted by
	\begin{equation}\label{R-sig}
		\mathbf{y}_k(s)=\sum_{i=1}^{L}h_{k,i}^{(s)}\mathbf{x}_i(s)+\mbox{\boldmath$\epsilon$}_k(s) \in \mathbb{C}^{\tilde{B}},
	\end{equation}
	where $\mbox{\boldmath$\epsilon$}_k(s)$ denotes the random noise vector at user $k$ in block $s$.
	Then from the \eqref{T-sig} and \eqref{R-sig}, the received signals by all users in block $s$ can be written as
	{\small
	\begin{align*}
		\mathbf{Y}(s)&=\left(
		\begin{array}{c}
			\mathbf {y}_{\mathcal{R}_{s,1}}(s)\\
			\mathbf {y}_{\mathcal{R}_{s,2}}(s)\\
			\vdots\\
			\mathbf {y}_{\mathcal{R}_{s,r_s}}(s)
		\end{array}			\right):=\mathbf{H}^{(s)}\left(
		\begin{array}{c}
			\mathbf{x}_1(s)\\
			\mathbf{x}_2(s)\\
			\vdots\\
			\mathbf{x}_L(s)
		\end{array}
		\right)+\left(
		\begin{array}{c}
			\mbox{\boldmath$\epsilon$}_1(s)\\
			\mbox{\boldmath$\epsilon$}_2(s)\\
			\vdots\\
			\mbox{\boldmath$\epsilon$}_{r_s}(s)\\
		\end{array}
		\right)\\
		&:=
		\left(
		\begin{array}{cccc}
			h_{1,1}^{(s)}&h_{1,2}^{(s)}&\cdots&h_{1,L}^{(s)}\\
			h_{2,1}^{(s)}&h_{2,2}^{(s)}&\cdots&h_{2,L}^{(s)}\\
			\vdots&\vdots&\ddots&\vdots\\
			h_{r_s,1}^{(s)}&h_{r_s,1}^{(s)}&\cdots &h_{r_s,L}^{(s)}
		\end{array}
		\right)
		\left(
		\begin{array}{cccc}
			v_{1,1}^{(s)}&v_{1,2}^{(s)}&\cdots&v_{1,r_s}^{(s)}\\
			v_{2,1}^{(s)}&v_{2,2}^{(s)}&\cdots&v_{2,r_s}^{(s)}\\
			\vdots&\vdots&\ddots&\vdots\\
			v_{L,1}^{(s)}&v_{L,2}^{(s)}&\cdots &v_{L,r_s}^{(s)}
		\end{array}
		\right)
		\left(
		\begin{array}{c}
			\mathbf{\tilde{W}}_{d_{\mathcal{R}_{s,1}},f_{\mathcal{R}_{s,1}}}\\
			\mathbf{\tilde{W}}_{d_{\mathcal{R}_{s,2}},f_{\mathcal{R}_{s,2}}}\\
			\vdots\\
			\mathbf{\tilde{W}}_{d_{\mathcal{R}_{s,r_s}},f_{\mathcal{R}_{s,r_s}}}
		\end{array}
		\right)
		+
		\left(
		\begin{array}{c}
			\mbox{\boldmath$\epsilon$}_1(s)\\
			\mbox{\boldmath$\epsilon$}_2(s)\\
			\vdots\\
			\mbox{\boldmath$\epsilon$}_{r_s}(s)\\
		\end{array}
		\right)\\
		&:=\left(
		\begin{array}{cccc}
			a_{1,1}^{(s)}&a_{1,2}^{(s)}&\cdots&a_{1,r_s}^{(s)}\\
			a_{2,1}^{(s)}&a_{2,2}^{(s)}&\cdots&a_{2,r_s}^{(s)}	\\
			\vdots &\vdots&\ddots&\vdots\\
			a_{r_s,1}^{(s)}&a_{r_s,2}^{(s)}&\cdots&a_{r_s,r_s}^{(s)}	\\
		\end{array}
		\right)
		\left(
		\begin{array}{c}
			\mathbf{\tilde{W}}_{d_{\mathcal{R}_{s,1}},f_{\mathcal{R}_{s,1}}}\\
			\mathbf{\tilde{W}}_{d_{\mathcal{R}_{s,2}},f_{\mathcal{R}_{s,2}}}\\
			\vdots\\
			\mathbf{\tilde{W}}_{d_{\mathcal{R}_{s,r_s}},f_{\mathcal{R}_{s,r_s}}}
		\end{array}
		\right)
		+
		\left(
		\begin{array}{c}
			\mbox{\boldmath$\epsilon$}_1(s)\\
			\mbox{\boldmath$\epsilon$}_2(s)\\
			\vdots\\
			\mbox{\boldmath$\epsilon$}_{r_s}(s)\\
		\end{array}
		\right)\\
		&:=\mathbf{R}^{(s)}
		\left(
		\begin{array}{c}
			\mathbf{\tilde{W}}_{d_{\mathcal{R}_{s,1}},f_{\mathcal{R}_{s,1}}}\\
			\mathbf{\tilde{W}}_{d_{\mathcal{R}_{s,2}},f_{\mathcal{R}_{s,2}}}\\
			\vdots\\
			\mathbf{\tilde{W}}_{d_{\mathcal{R}_{s,r_s}},f_{\mathcal{R}_{s,r_s}}}
		\end{array}
		\right)
		+
		\left(
		\begin{array}{c}
			\mbox{\boldmath$\epsilon$}_1(s)\\
			\mbox{\boldmath$\epsilon$}_2(s)\\
			\vdots\\
			\mbox{\boldmath$\epsilon$}_{r_s}(s)\\
		\end{array}
		\right),	
	\end{align*}}where $\mathbf{H}^{(s)}$ is random interference channel matrix with dimension $r_s\times L$ in block $s$, and any submatrix of $\mathbf{H}^{(s)}$ with dimension $L\times L$ is invertible with high probability. If each user $k\in\mathcal{R}_s$ can utilize the cache content to subtract the interference from its received signal,  in correspondence of $\mathbf{y}_k(s)$, it ``sees" the output of an equivalent point-to-point Gaussian channel given by ($\tilde{\mathcal{Z}_k}$ denotes the coded packets cached by user $k$)

	\begin{equation}\label{R-req}
		\mathcal{L}_{s,k}(\mathbf{y}_k(s),\mathcal{\tilde{Z}}_k)=\mathbf{\tilde{W}}_{d_k,f_k}+\mbox{\boldmath$\epsilon$}_k(s),
	\end{equation}
	for which the rate that scales as $\text{log}P+o(\text{log}P)$ for large $P$ is achievable. Since $\tilde{\mathbf{W}}_{d_k,f_k}$ is encoded by a rate of $\text{log}P+o(\text{log}P)$, each packet can be decoded with vanishing error probability as $B$ increases. In order to make readers easily understand, we omit the encoding function $\psi$ when we focus
	on introducing our schemes.	

	The one-shot linear  sum-DoF in block $s$ is $r_s$, i.e., the received rate (the sum of all served users in block $s$). Therefore the sum-DoF of the whole system for the demand vector $\mathbf{d}$ is $\frac{\sum_{s=1}^S r_s}{S}$.  A sum-DoF $d(L,N,M,K)$  is said achievable if there exists a two-phase coded caching scheme  with delivery rate $\text{log}P+o(\text{log}P)$ in each time slot,
 	where the sum-DoF of the whole system for each possible demand vector is at least $d(L,N,M,K)$.
	Our objective is to find the maximum (or supremum) of all achievable sum-DoFs.
	While achieving the maximum sum-DoF, the subpacketization of the proposed scheme should be as low as possible.
	
	\section{Multiple-antenna placement delivery array}\label{sec-MAPDA}
	\subsection{Placement delivery array and Latin square}	
	\begin{definition}(\cite{YCTC})\label{def-PDA}
		For  positive integers $K$, $F$, $Z$ and $S$, an $F\times K$ array  $\mathbf{Q}=(\mathbf{Q}(f,k))_{f\in[F],k\in[K]}$, composed of a specific symbol $``*"$  and $S$ positive integers
		$1,2,\cdots, S$, is called a $(K,F,Z,S)$ placement delivery array (PDA) if it satisfies the following conditions:
		\begin{enumerate}
			\item [C$1$.] The symbol $``*"$ appears $Z$ times in each column;
			\item [C$2$.] Each integer occurs at least once in the array;
			\item [C$3$.] For any two distinct entries $\mathbf{Q}(f_1,k_1)=\mathbf{Q}(f_2,k_2)=s$ is an integer  only if
			\begin{enumerate}
				\item [a.] $f_1\ne f_2$, $k_1\ne k_2$, i.e., they lie in distinct rows and distinct columns; and
				\item [b.] $\mathbf{Q}(f_1,k_2)=\mathbf{Q}(f_2,k_1)=*$, i.e., the corresponding $2\times 2$  subarray formed by rows $f_1$, $f_2$ and columns $k_1$, $k_2$ must be of the following form
				\begin{eqnarray*}
					\left(\begin{array}{cc}
						s & *\\
						* & s
					\end{array}\right)~\textrm{or}~
					\left(\begin{array}{cc}
						* & s\\
						s & *
					\end{array}\right).
				\end{eqnarray*}
			\end{enumerate}
		\end{enumerate}
		\hfill $\square$
	\end{definition}
	 Based on a $(K,F,Z,S)$ PDA, an $F$-division coded caching scheme for the $(K,M,N)$ caching system, where $M/N=Z/F$, can be obtained by using Algorithm \ref{alg-PDA}.
		\begin{lemma}(\cite{YCTC})\label{le-PDA}
			Given any   $(K,F,Z,S)$ PDA, there exists an $F$-division caching scheme for the $(K,M,N)$ caching system with memory ratio $\frac{M}{N}=\frac{Z}{F}$, load $  \frac{S}{F}$, and subpacketization $ F$.
			
			\hfill $\square$
		\end{lemma}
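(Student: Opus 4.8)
The plan is to exhibit the explicit $F$-division scheme attached to the PDA (this is exactly what Algorithm~\ref{alg-PDA} encodes) and to verify the cache size, the load, the subpacketization, and the correctness of delivery. First I would index the $F$ packets of each file by the rows of $\mathbf{Q}$, writing $\mathbf{W}_n=\{\mathbf{W}_{n,1},\dots,\mathbf{W}_{n,F}\}$. In the placement phase, set $\mathcal{Z}_k=\{\mathbf{W}_{n,f}:n\in[N],\ \mathbf{Q}(f,k)=*\}$. By condition C1 the symbol $*$ occurs exactly $Z$ times in column $k$, so user $k$ stores $ZN$ packets, i.e.\ a fraction $Z/F$ of the whole library, which gives the memory ratio $M/N=Z/F$. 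In the delivery phase, for each integer $s\in[S]$ the server broadcasts the single coded packet obtained by the bitwise XOR
\[
\mathbf{X}_s=\bigoplus_{(f,k):\,\mathbf{Q}(f,k)=s}\mathbf{W}_{d_k,f},
\]
which is well defined since all summands lie in $\mathbb{F}_2^B$. There are $S$ such transmissions, each of the size of one packet, so the normalized load is $S/F$ and the subpacketization is $F$, as claimed.

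It then remains to verify that every user $k$ recovers its requested file $\mathbf{W}_{d_k}$. The packets $\mathbf{W}_{d_k,f}$ with $\mathbf{Q}(f,k)=*$ are already in $\mathcal{Z}_k$, so it suffices to decode the remaining $F-Z$ packets, namely those rows $f$ for which $\mathbf{Q}(f,k)=s$ is an integer; I claim each such packet is recovered from $\mathbf{X}_s$. Fix $(f,k,s)$ with $\mathbf{Q}(f,k)=s$. By condition C3(a), two entries equal to $s$ lie in distinct columns, so $\mathbf{W}_{d_k,f}$ is the only summand of $\mathbf{X}_s$ indexed by column $k$; every other summand has the form $\mathbf{W}_{d_{k'},f'}$ with $\mathbf{Q}(f',k')=s$, $k'\ne k$, $f'\ne f$. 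For each such summand, applying condition C3(b) to the $2\times2$ subarray on rows $f,f'$ and columns $k,k'$ forces $\mathbf{Q}(f',k)=*$, hence $\mathbf{W}_{d_{k'},f'}\in\mathcal{Z}_k$. Therefore user $k$ cancels all interfering summands of $\mathbf{X}_s$ using its cache and is left with $\mathbf{W}_{d_k,f}$. Running this over all integer entries of column $k$ recovers the whole file, and condition C2 guarantees that no transmission $\mathbf{X}_s$ is vacuous.

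The step I expect to be the crux is the interference-cancellation argument of the previous paragraph: everything hinges on condition C3, with part (a) ensuring that each transmission carries at most one packet wanted by a given user (so there is nothing to disentangle within a column) and part (b) ensuring that all the other packets combined into that transmission were pre-cached by that user. Conditions C1 and C2 then play only the bookkeeping roles of pinning down the cache size and ruling out empty transmissions. I would conclude by noting that this construction is precisely the output of Algorithm~\ref{alg-PDA}, so the lemma follows.
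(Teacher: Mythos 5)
Your proposal is correct and is essentially the same argument the paper relies on: it is exactly the scheme of Algorithm~\ref{alg-PDA}, with C1 giving the memory ratio, the $S$ XOR transmissions giving load $S/F$, and C3(a)/(b) giving the interference-cancellation step (as summarized in the paper's Remark~1 and the cited reference). No gaps.
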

		
		\begin{algorithm}
			\caption{Caching scheme based on PDA in \cite{YCTC}}\label{alg-PDA}
			\begin{algorithmic}[1]
				\Procedure {Placement}{$\mathbf{Q}$, $\mathcal{W}$}
				\State Split each file $W_n\in\mathcal{W}$ into $F$ packets, i.e., $W_{n}=\{W_{n,f}\ |\ f=1,2,\cdots,F\}$.
				\For{$k\in [K]$}
				\State $\mathcal{Z}_k\leftarrow\{W_{n,f}\ |\ \mathbf{P}(f,k)=*, n=[N],f=[F]\}$
				\EndFor
				\EndProcedure
				\Procedure{Delivery}{$\mathbf{Q}, \mathcal{W},{\bf d}$}
				\For{$s=1,2,\cdots,S$}
				\State  Server sends $\bigoplus_{\mathbf{Q}(f,k)=s,f\in[F],k\in[K]}W_{d_{k},f}$.
				\EndFor
				\EndProcedure
			\end{algorithmic}
		\end{algorithm}

	\begin{remark}
		From Algorithm \ref{alg-PDA}, the relationships between a $(K,F,Z,S)$ PDA $\mathbf{Q}$ and its realizing coded caching scheme as follows.
		\begin{itemize}
			\item The $K$ columns and $F$ rows denote the users and packets of each file, respectively. The entry $\mathbf{Q}(f,k)=*$ represents that the $f^{\text{th}}$ packet of all files is cached by user $k$. Each user caches $M=\frac{ZN}{F}$ files by Condition C$1$ of Definition \ref{def-PDA}.
			\item The server will broadcast the multicast messages to users in block $s$, i.e., the XOR of all the requested packets which are indicated by $s$ are sent to users. Each user can obtain the demanded packet in block $s$ from Condition C$3$ of Definition \ref{def-PDA}.
			\item Condition C$2$ of Definition \ref{def-PDA} implies that the number of each integer $s$ appears $r_s$ times, i.e., the coded caching gain is $r_s$ in block $s$.
		\end{itemize}
		\hfill $\square$
	\end{remark}
	Specially, if each integer appears $g$ times in the array, the PDA is a  g-regular PDA, denoted by $g$-$(K,F,Z,S)$ PDA.
	The following lemma shows that the MN scheme corresponds to a specific PDA, referred to as MN PDA.

	\begin{lemma}(\cite{MN} MN PDA)
		\label{le-MN}
		For any positive integers $K$ and $t$ with $t<K$, there exists a $(t+1)$-$\left(K,{K\choose t},{K-1\choose t-1},{K\choose t+1}\right)$ PDA.
		\hfill $\square$
	\end{lemma}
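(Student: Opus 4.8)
The plan is to give an explicit construction and then verify the three PDA axioms of Definition~\ref{def-PDA} together with $(t+1)$-regularity by direct inspection. First I would index the $\binom{K}{t}$ rows of $\mathbf{Q}$ by the $t$-subsets $\mathcal{T}\subseteq[K]$ and the $K$ columns by the elements $k\in[K]$, and fix an arbitrary bijection $\phi$ from the family of $(t+1)$-subsets of $[K]$ onto $\left[\binom{K}{t+1}\right]$. I then set $\mathbf{Q}(\mathcal{T},k)=*$ whenever $k\in\mathcal{T}$, and $\mathbf{Q}(\mathcal{T},k)=\phi(\mathcal{T}\cup\{k\})$ whenever $k\notin\mathcal{T}$. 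Since exactly $\binom{K-1}{t-1}$ of the $t$-subsets of $[K]$ contain a prescribed element $k$, column $k$ carries exactly $Z=\binom{K-1}{t-1}$ stars, which is Condition~C$1$.

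For Condition~C$2$ and regularity, I would observe that for a fixed $(t+1)$-subset $\mathcal{S}$ the integer $\phi(\mathcal{S})$ occupies precisely the positions $(\mathcal{S}\setminus\{k\},k)$ as $k$ ranges over $\mathcal{S}$, and no other positions; hence every integer in $\left[\binom{K}{t+1}\right]$ appears exactly $t+1$ times. This simultaneously establishes C$2$ and shows that $\mathbf{Q}$ is $(t+1)$-regular, and it pins down $S=\binom{K}{t+1}$.

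The remaining work is Condition~C$3$. Suppose $\mathbf{Q}(\mathcal{T}_1,k_1)=\mathbf{Q}(\mathcal{T}_2,k_2)=s$ is an integer with $(\mathcal{T}_1,k_1)\neq(\mathcal{T}_2,k_2)$. By construction $k_1\notin\mathcal{T}_1$, $k_2\notin\mathcal{T}_2$, and $\mathcal{T}_1\cup\{k_1\}=\mathcal{T}_2\cup\{k_2\}=\mathcal{S}:=\phi^{-1}(s)$. If $k_1=k_2$ then $\mathcal{T}_1=\mathcal{S}\setminus\{k_1\}=\mathcal{S}\setminus\{k_2\}=\mathcal{T}_2$, a contradiction; hence $k_1\neq k_2$, which is the column part of C$3$(a). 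Because $k_1\in\mathcal{S}$ and $k_1\neq k_2$, we get $k_1\in\mathcal{S}\setminus\{k_2\}=\mathcal{T}_2$, and symmetrically $k_2\in\mathcal{T}_1$; in particular $k_1\in\mathcal{T}_2\setminus\mathcal{T}_1$, so $\mathcal{T}_1\neq\mathcal{T}_2$, completing C$3$(a). Finally $k_2\in\mathcal{T}_1$ forces $\mathbf{Q}(\mathcal{T}_1,k_2)=*$ and $k_1\in\mathcal{T}_2$ forces $\mathbf{Q}(\mathcal{T}_2,k_1)=*$, which is exactly the $2\times2$ pattern demanded by C$3$(b). Therefore $\mathbf{Q}$ is a $(t+1)$-$\bigl(K,\binom{K}{t},\binom{K-1}{t-1},\binom{K}{t+1}\bigr)$ PDA. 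No step here is genuinely difficult, since the construction is the one forced by the MN placement; the only point deserving a little care is verifying that the integer labels really exhaust $\left[\binom{K}{t+1}\right]$, which is why I would make the labelling bijection $\phi$ explicit rather than leaving it implicit.
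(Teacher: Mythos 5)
Your construction is correct and is exactly the canonical MN PDA (rows indexed by $t$-subsets, stars at $k\in\mathcal{T}$, integers labelling the $(t+1)$-subsets), which is the construction the paper invokes by citation rather than reproving. All verifications of C$1$--C$3$ and $(t+1)$-regularity check out, so nothing further is needed.
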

		\begin{definition}(\cite{CJ})
			\label{def-Latin}
			A Latin square is an $n \times n $ squared array in which there are exactly $n$ different elements, each of which appears exactly once in each row and column, where $n$ is a positive integer.
			\hfill $\square$
		\end{definition}
It is well known that for any positive integer $n$ there always exists a Latin square of order $n$. For example, when $n=5$ the following square $\mathbf{L}$ is a Latin square.
		\begin{eqnarray}\label{LT-square}
		\mathbf{L}=\left(
			\begin{array}{ccccc}
				1&2&3&4&5\\
				2&3&4&5&1\\
				3&4&5&1&2\\
				4&5&1&2&3\\
				5&1&2&3&4
			\end{array}
		\right).
		\end{eqnarray}
	It can be seen that each integer $s\in[5]$ occurs in each row and column exactly once.
	\subsection{Multiple-antenna Placement Delivery Array}
	In this section, we propose a novel placement delivery array to characterize the multiple antennas coded caching scheme, referred to as multiple-antenna placement delivery array (MAPDA), which combines the concept of PDA with zero-forcing.
	\begin{definition}\rm
		\label{def-MAPDA}
		For any positive integers $L$, $K$, $F$, $Z$ and $S$, an $F\times K$ array $\mathbf{P}$ composed of $``*"$ and $[S]$ is called $(L,K,F,Z,S)$ multiple-antenna placement delivery array (MAPDA) if it satisfies Conditions C$1$, C$2$ in Definition \ref{def-PDA} and
		\begin{itemize}
			\item[C$3$.] Each integer $s$ appears at most once in each column;
			\item[C$4$.] For any integer $s\in[S]$, define  $\mathbf{P}^{(s)} $
			to be the subarray of $\mathbf{P}$ including the rows and columns containing $s$, and let $r'_s\times r_s$ denote the dimensions of $\mathbf{P}^{(s)}$.  The number of integer entries in each row  of $\mathbf{P}^{(s)}$ is less than or equal to $L$, i.e.,
			\begin{eqnarray}\label{C4}
				\left|\{k_1\in [r_s]  |\ \mathbf{P}^{(s)}(f_1,k_1)\in[S]\}\right|\leq L, \ \forall f_1 \in [r'_s].
			\end{eqnarray}
		\end{itemize}
		\hfill $\square$
	\end{definition}
	
	If each integer appears $g$ times in the $\mathbf{P}$, then $\mathbf{P}$ is a g-regular MAPDA, denoted by  $g$-$(L,K,F,Z,S)$ MAPDA.

	\begin{example}\label{ex-1}
		The following array $\mathbf{P}$ is a $4$-$(3,4,4,1,3)$ MAPDA, 	
		\begin{eqnarray}
			\label{ex-MAPDA}
			\mathbf{P}=\left(\begin{array}{cccc}
				* & 1 & 2 & 3 \\
				1 & * & 3 & 2 \\
				2 & 3 & * & 1 \\
				3 & 2 & 1 & *
			\end{array}\right).
		\end{eqnarray}
	Notice that each integer appears once in each column, so that C$3$ is satisfied.	To see that also C$4$ holds, consider for example $s = 1$.
	It can be seen that $\mathbf{P}^{(1)} = \mathbf{P}$ and that each row of $\mathbf{P}^{(1)}$ contains $L = 3$ integer entries  and one star. The same happens for $s = 2$, $3$. Hence also C$4$ is satisfied. 		
		\hfill $\square$
	\end{example}

	Similar to the coded caching scheme realized by a PDA, we can use an MAPDA to generate a multiple antennas coded caching scheme. Specifically given a $(L,K,F,Z,S)$ MAPDA $\mathbf{P}$, we obtain  an $F$-division $(L,K,M,N)$ multiple antennas coded caching scheme with memory size $M=\frac{ZN}{F}$ as follows.
	\begin{itemize}
		\item {\bf Placement phase:} Employing   the placement strategy in Algorithm \ref{alg-PDA}, each file $\mathbf{W}_n$ is divided into $F$ packets with equal size, i.e., $\mathbf{W}_{n}=(\mathbf{W}_{n,f}\ |\ f\in [F])$, and each user $k$ caches the following packets by Line 4 of Algorithm \ref{alg-PDA}.
		\begin{eqnarray}
			\label{eq-caching-content}
			\mathcal{Z}_k=\{\mathbf{W}_{n,f}\ |\ \mathbf{P}(f,k)=*,\ f\in [F],\ n\in [N]\}.
		\end{eqnarray}
		Then each user caches $M=\frac{ZN}{F}$ files;
	
		\item {\bf Delivery phase:} For any request vector ${\bf d}$, similar to the delivery strategy in Algorithm \ref{alg-PDA}, each integer $s\in [S]$ also indicates the multicast messages  sent by the server with $L$ antennas in block $s$ through the MISO broadcast channel according to the channel matrix $\mathbf{H}^{(s)}$ with dimension $r_s\times L$. Assume that there are $r_s$  entries $\mathbf{P}(f_{\mathcal{R}_{s,1}},\mathcal{R}_{s,1})$, $\mathbf{P}(f_{\mathcal{R}_{s,2}},\mathcal{R}_{s,1})$, $\ldots$ , $\mathbf{P}(f_{\mathcal{R}_{s,r_s}},\mathcal{R}_{s,1})$ equal to $s$, where $f_{\mathcal{R}_{s,i}}\in[F]$ and $\mathcal{R}_{s,i}\in[K]$  for each $i\in [r_s]$.
		From Condition C3 in Definition \ref{def-MAPDA}, we have that the column indices $\mathcal{R}_{s,i}$ are distinct, and we can assume without off of generality that
		$\mathcal{R}_{s,1}<\mathcal{R}_{s,2}<\cdots<\mathcal{R}_{s,r_s}$. From \eqref{eq-caching-content} each user $\mathcal{R}_{s,i}$ where $i\in[r_s]$ does not cache its requiring packet $W_{d_{\mathcal{R}_{s,i}},f_{\mathcal{R}_{s,i}}}$, since $\mathbf{P}(f_{\mathcal{R}_{s,i}},\mathcal{R}_{s,i})\neq *$. The vector of  packets to be transmitted in block $s$ and the user set to recover these packets are denoted by
		\begin{equation}
			\label{eq-packet-user-time-s}
			\mathbf{W}^{(s)}=\left(
			\begin{array}{c}
				\mathbf{W}_{d_{\mathcal{R}_{s,1}},f_{\mathcal{R}_{s,1}}}\\
				\mathbf{W}_{d_{\mathcal{R}_{s,2}},f_{\mathcal{R}_{s,2}}}\\
				\vdots\\
				\mathbf{W}_{d_{\mathcal{R}_{s,r_s}},f_{\mathcal{R}_{s,r_s}}}
			\end{array}
			\right),\ \ \ \ \ \ \ \mathcal{R}_s=\{\mathcal{R}_{s,1},\mathcal{R}_{s,2},\ldots,\mathcal{R}_{s,r_s}\},
		\end{equation}
		respectively.
		For each $\mathcal{R}_{s,i}\in \mathcal{R}_s$, assume that there are $l^{(s)}_{i}$ columns with indices in  $\mathcal{R}_s$, which contains integers at the row $f_{\mathcal{R}_{s,i}}$ of $\mathbf{P}$. The column index set is denoted by
		\begin{eqnarray}
			\label{eq-not-cached}
			\mathcal{P}^{(s)}_i=\{\mathcal{R}_{s,i'}\in \mathcal{R}_s\ |\ \mathbf{P}(f_{\mathcal{R}_{s,i}},\mathcal{R}_{s,i'})\in [S],\ i'\in [r_s]\}.
		\end{eqnarray}
		By \eqref{eq-caching-content} and \eqref{eq-not-cached}, the demanded packet $\mathbf{W}_{d_{\mathcal{R}_{s,i}},f_{\mathcal{R}_{s,i}}}$ required by user $\mathcal{R}_{s,i}$ is not cached by user $\mathcal{R}_{s,i'}$   iff $\mathcal{R}_{s,i'}\in\mathcal{P}^{(s)}_i$. In the following we will take the column indices in $\mathcal{R}_s$ and row indices $f_{\mathcal{R}_{s,1}}$, $f_{\mathcal{R}_{s,2}}$, $\ldots$, $f_{\mathcal{R}_{s,r_s}}$ as the columns indices and row indices of the subarray $\mathbf{P}^{(s)}$. Then the integer $l^{(s)}_{i}$ is exactly the number of integer entries at row $f_{\mathcal{R}_{s,i}}$ of $\mathbf{P}^{(s)}$. From   Condition C$4$ of Definition \ref{def-MAPDA}, we have $l^{(s)}_{i}\leq L$. Recall that any $l^{(s)}_{i}$ rows of $\mathbf{H}^{(s)}$ are linear independent with high probability. We take the column indices in $\mathcal{R}_s$ and row indices $f_{\mathcal{R}_{s,1}}$, $f_{\mathcal{R}_{s,2}}$, $\ldots$, $f_{\mathcal{R}_{s,r_s}}$ as the  column   and row  indices of $\mathbf{H}^{(s)}$, respectively.  From linear algebra, for each $i\in[r_s]$ we can get a column vector ${\bf v}^{(s)}_i$ such that
		\begin{eqnarray}
			\label{eq-coding-vector}
			\mathbf{H}^{(s)}(\mathcal{R}_{s,i}) {\bf v}^{(s)}_i=1,\ \ \ \mathbf{H}^{(s)}(\mathcal{R}_{s,i'}) {\bf v}^{(s)}_i=0,\ \ \ \forall i'\in\mathcal{P}^{(s)}_i\setminus\{i\},
		\end{eqnarray} where $\mathbf{H}^{(s)}(\mathcal{R}_{s,i})$ and $\mathbf{H}^{(s)}(\mathcal{R}_{s,i'})$ are the rows with indices $\mathcal{R}_{s,i}$  and $ \mathcal{R}_{s,i'}$ of $\mathbf{H}^{(s)}$, respectively.  We define that the precoding matrix is
		$
		\mathbf{V}^{(s)}=\left({\bf v}^{(s)}_1,{\bf v}^{(s)}_2,\ldots,{\bf v}^{(s)}_{r_s} \right).
		$
		Thus the transmitted messages in block $s$ by the server are
		\begin{eqnarray*}
			\mathbf{X}(s)=\mathbf{V}^{(s)}\mathbf{W}^{(s)}.
		\end{eqnarray*}
		The received messages by the users in $\mathcal{R}_s$ over the multiple antennas broadcast channel in block $s$ are
		\begin{eqnarray*}
			\mathbf{Y}_s&=&\mathbf{H}^{(s)}\mathbf{X}(s)=\mathbf{H}^{(s)}\mathbf{V}^{(s)}\mathbf{W}^{(s)}\\
			&=&\mathbf{H}^{(s)}\left({\bf v}^{(s)}_1,{\bf v}^{(s)}_2,\ldots,{\bf v}^{(s)}_{r_s} \right)
			\mathbf{W}^{(s)}\\
			&=&\left(\mathbf{H}^{(s)}{\bf v}^{(s)}_1,\mathbf{H}^{(s)}{\bf v}^{(s)}_2,\ldots,\mathbf{H}^{(s)}{\bf v}^{(s)}_{r_s} \right)
			\mathbf{W}^{(s)}\\
			&:=&\mathbf{R}^{(s)}\mathbf{W}^{(s)}.
		\end{eqnarray*}
		By \eqref{eq-coding-vector},  each column of $\mathbf{R}^{(s)}$ has at least $l^{(s)}_{i}-1$ zero entries. Furthermore, we have $\mathbf{R}^{(s)}(\mathcal{R}_{s,i},i)=1$ and $\mathbf{R}^{(s)}(\mathcal{R}_{s,i},i')=0$ for each $i'\in \mathcal{P}^{(s)}_i\setminus\{i\}$. This implies that for any required packet $\mathbf{W}_{d_{\mathcal{R}_{s,i'}},f_{\mathcal{R}_{s,i'}}}$is not cached by user $\mathcal{R}_{s,i}$,  we have $\mathbf{R}^{(s)}(\mathcal{R}_{s,i},i')=0$ if $i'\neq i$ and $\mathbf{R}^{(s)}(\mathcal{R}_{s,i},i')=1$ if $i=i'$.
		Thus in the received  message of user $\mathcal{R}_{s,i}$, there only exist its required packet and cached packets (i.e., the interference packets are zero forced), and thus it can decode the required packet.
		
		Since the number of uncached packets of each file by each user is the same which is equal to $F-Z$, we have $\sum_{s=1}^S r_s=K(F-Z)$. Hence,
		the sum-DoF in the whole procedure is $\frac{\sum_{s=1}^S r_s}{S}=\frac{K(F-Z)}{S}$.
	\end{itemize}
		
	We then continue Example~\ref{ex-1} to illustrate the resulting   scheme of the MAPDA in~\eqref{ex-MAPDA}.
	\begin{example}	\label{ex-2}
	Using the $(3,4,4,1,3)$ MAPDA $\mathbf{P}$ in \eqref{ex-MAPDA}, we can obtain a multiple antennas coded caching scheme as follows.
	\begin{itemize}
		\item {\textbf{Placement phase:}} Divide each file into $4$ packets with equal size, i.e., $\mathbf{W}_n=\{\mathbf{W}_{n,1}$, $\mathbf{W}_{n,2}$, $\mathbf{W}_{n,3}$, $\mathbf{W}_{n,4})$, $n\in[4]$. By \eqref{eq-caching-content}, the caches of users are
		\begin{eqnarray*}
			&&\mathcal{Z}_1=\{\mathbf{W}_{n,1}\ |\ n\in[4]\},\ \ \ \ \mathcal{Z}_2=\{\mathbf{W}_{n,2}\ |\ n\in[4]\},\\
			&&\mathcal{Z}_3=\{\mathbf{W}_{n,3}\ |\ n\in[4]\},\ \ \ \ \mathcal{Z}_4=\{\mathbf{W}_{n,4}\ |\ n\in[4]\}.
		\end{eqnarray*}
				
		\item {\textbf {Delivery phase:}} Assume that the request vector is $\textbf{d}=(1,2,3,4)$. From \eqref{eq-packet-user-time-s}, all the required packet to be transmitted in the $3$ blocks are				
		\begin{equation*}
			\mathbf{W}^{(1)}=\left(
			\begin{array}{c}
				\mathbf{W}_{1,2}\\
				\mathbf{W}_{2,1}\\
				\mathbf{W}_{3,4}\\
				\mathbf{W}_{4,3}
			\end{array}
			\right),\ \
			\mathbf{W}^{(2)}=\left(
			\begin{array}{c}
				\mathbf{W}_{1,3}\\
				\mathbf{W}_{2,4}\\
				\mathbf{W}_{3,1}\\
				\mathbf{W}_{4,2}
			\end{array}
			\right),\ \ \mathbf{W}^{(3)}=\left(
			\begin{array}{c}
				\mathbf{W}_{1,4}\\
				\mathbf{W}_{2,3}\\
				\mathbf{W}_{3,2}\\
				\mathbf{W}_{4,1}
			\end{array}
			\right),
		\end{equation*}
		while the sets of users who are involved in the $3$ blocks are
		$
			\mathcal{R}_{1 }=\mathcal{R}_{2 }=\mathcal{R}_{3 }=\{1,2,3,4\},
		$
		respectively. Next we consider block $1$, where we have $r_1=4$ and
		\begin{eqnarray*}
			\mathcal{P}^{(1)}_1=\{1,3,4\},\ \ \mathcal{P}^{(1)}_2=\{2,3,4\},\ \ \mathcal{P}^{(1)}_3=\{1,2,3\},\ \ \mathcal{P}^{(1)}_4=\{1,2,4\}.
		\end{eqnarray*}
		Let the precoding matrix be
		$\mathbf{V}^{(1)}=({\bf v}^{(1)}_1, {\bf v}^{(1)}_2,{\bf v}^{(1)}_3,{\bf v}^{(1)}_4)$, where each ${\bf v}^{(1)}_i$ satisfies \eqref{eq-coding-vector} for $i\in [4]$, i.e., the server sends the messages $\mathbf{X}(1)=\mathbf{V}^{(1)}\mathbf{W}^{(1)}$ and the users in $\mathcal{R}_{1 }$ receive
		\begin{eqnarray*}
			\mathbf{Y}_1&=&\mathbf{H}^{(1)}\mathbf{X}(1)=\mathbf{H}^{(1)}\mathbf{V}^{(1)}\mathbf{W}^{(1)}\\
			&=&\left(
			\begin{array}{cccc}
				1        & a_{1,2}^{(1)} & 0        & 0 \\
				a_{2,1}^{(1)} & 1        & 0        & 0 \\
				0 & 0        & 1        & a_{3,4}^{(1)}\\
				0 & 0        & a_{4,3}^{(1)} & 1 \\
			\end{array}
			\right)\left(
			\begin{array}{c}
				\mathbf{W}_{1,2}\\
				\mathbf{W}_{2,1}\\
				\mathbf{W}_{3,4}\\
				\mathbf{W}_{4,3}
			\end{array}
			\right)
			= \left(
			\begin{array}{c}
				\mathbf{W}_{1,2}+a_{1,2}^{(1)}\mathbf{W}_{2,1}\\
				a_{2,1}^{(1)}\mathbf{W}_{1,2}+\mathbf{W}_{2,1}\\
				\mathbf{W}_{3,4}+a_{3,4}^{(1)}\mathbf{W}_{4,3}\\
				a_{4,3}^{(1)}\mathbf{W}_{3,4}+\mathbf{W}_{4,3}
			\end{array}
			\right).
		\end{eqnarray*}
		In the above transmission, all the  users can get their required packets by  using their caches. For example, user $1$ can obtain $\mathbf{W}_{1,2}$ from the first component of $\mathbf{Y}_1$ by subtracting $\mathbf{W}_{2,1}$, which is contained in user $1$ cache.
		\end{itemize}	
		Since  $4$ packets are transmitted by the server to satisfy $4$ users in each block,   the sum-DoF is $4=1+3=\frac{KM}{N}+L$.
			
		\hfill $\square$
	\end{example}

	From the above analysis, we can obtain the following result.
	\begin{theorem}\label{th-MAPDA-CCS}
		For a given $(L,K,F,Z,S)$ MAPDA $\mathbf{P} $,  there exists an  $F$-division scheme for the   $(L,K,M,N)$ multiple antennas coded caching  problem with memory ratio $\frac{M}{N}=\frac{Z}{F}$, sum-DoF $\frac{K(F-Z)}{S}$ and subpacketization $F$.
		\hfill $\square$
	\end{theorem}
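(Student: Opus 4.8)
The plan is to read the array $\mathbf{P}$ literally as a blueprint, exactly as Algorithm~\ref{alg-PDA} reads a PDA: the $F$ rows index the packets of each file, the $K$ columns index the users, and the only new ingredient is that the XOR multicasting of the shared-link model is replaced by zero-forcing over the $L$ transmit antennas. For the \textbf{placement} I would split each $\mathbf{W}_n$ into $F$ equal packets and let user $k$ cache $\mathbf{W}_{n,f}$ for every $n$ whenever $\mathbf{P}(f,k)=*$, i.e. use \eqref{eq-caching-content}. By Condition C$1$ each column of $\mathbf{P}$ carries $Z$ stars, so user $k$ stores $ZN$ packets, i.e. $ZN/F$ files; hence $M/N=Z/F$, and the subpacketization is $F$ by construction. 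These two facts are immediate and need no further argument.

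For the \textbf{delivery}, fix a demand vector $\mathbf{d}$ and a block $s\in[S]$. By Conditions C$2$ and C$3$ the symbol $s$ occupies exactly $r_s$ cells lying in $r_s$ distinct columns $\mathcal{R}_{s,1},\dots,\mathcal{R}_{s,r_s}$; let $f_{\mathcal{R}_{s,i}}$ be the row of the cell in column $\mathcal{R}_{s,i}$. Since $\mathbf{P}(f_{\mathcal{R}_{s,i}},\mathcal{R}_{s,i})=s\neq *$, user $\mathcal{R}_{s,i}$ has not cached the packet $\mathbf{W}_{d_{\mathcal{R}_{s,i}},f_{\mathcal{R}_{s,i}}}$, and I would deliver all $r_s$ of these packets simultaneously in block $s$: stack them into $\mathbf{W}^{(s)}$ as in \eqref{eq-packet-user-time-s}, let $\mathbf{H}^{(s)}$ be the $r_s\times L$ channel submatrix picking out the rows of the users in $\mathcal{R}_s$, and transmit $\mathbf{X}(s)=\mathbf{V}^{(s)}\mathbf{W}^{(s)}$ for a precoder $\mathbf{V}^{(s)}=({\bf v}^{(s)}_1,\dots,{\bf v}^{(s)}_{r_s})\in\mathbb{C}^{L\times r_s}$ to be designed. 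For each $i$ let $\mathcal{P}^{(s)}_i$ be the set of columns of $\mathcal{R}_s$ whose entry in row $f_{\mathcal{R}_{s,i}}$ is an integer, which is exactly the set of users in $\mathcal{R}_s$ that have \emph{not} cached packet $f_{\mathcal{R}_{s,i}}$; Condition C$4$ says precisely $|\mathcal{P}^{(s)}_i|\le L$. I would then pick ${\bf v}^{(s)}_i$ to satisfy \eqref{eq-coding-vector}, namely unit response at user $\mathcal{R}_{s,i}$ and null response at every $\mathcal{R}_{s,i'}$ with $i'\in\mathcal{P}^{(s)}_i\setminus\{i\}$. This is a system of at most $L$ linear equations in the $L$ entries of ${\bf v}^{(s)}_i$, and since the channel is generic (any $L$ rows of $\mathbf{H}^{(s)}$ are linearly independent with high probability), a solution exists. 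This is the one point where the defining property C$4$ of an MAPDA is genuinely used, and stating it cleanly is the step I expect to require the most care.

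For \textbf{decoding and counting}: after the channel, user $\mathcal{R}_{s,i}$ observes, up to noise, $\sum_{i'=1}^{r_s}\big(\mathbf{H}^{(s)}(\mathcal{R}_{s,i}){\bf v}^{(s)}_{i'}\big)\,\mathbf{W}_{d_{\mathcal{R}_{s,i'}},f_{\mathcal{R}_{s,i'}}}$. The term $i'=i$ has coefficient $1$ and is the wanted packet. For $i'\neq i$, either $\mathbf{P}(f_{\mathcal{R}_{s,i'}},\mathcal{R}_{s,i})=*$, in which case user $\mathcal{R}_{s,i}$ has cached $\mathbf{W}_{d_{\mathcal{R}_{s,i'}},f_{\mathcal{R}_{s,i'}}}$ and subtracts it; or $\mathbf{P}(f_{\mathcal{R}_{s,i'}},\mathcal{R}_{s,i})$ is an integer, which means $\mathcal{R}_{s,i}\in\mathcal{P}^{(s)}_{i'}\setminus\{i'\}$, so ${\bf v}^{(s)}_{i'}$ was designed to null it and the coefficient is $0$. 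Hence after cache-cancellation user $\mathcal{R}_{s,i}$ sees the point-to-point channel \eqref{R-req} and recovers $\mathbf{W}_{d_{\mathcal{R}_{s,i}},f_{\mathcal{R}_{s,i}}}$ at rate $\log P+o(\log P)$. Nothing in this argument depends on $\mathbf{d}$, so the scheme works for every demand vector. Finally, by C$1$ each user has exactly $F-Z$ uncached packets per file, and the integer in its column pins down the unique block (by C$3$) in which that packet is delivered; summing over users gives $\sum_{s=1}^{S}r_s=K(F-Z)$, so the sum-DoF equals $\tfrac{1}{S}\sum_{s=1}^{S}r_s=\tfrac{K(F-Z)}{S}$. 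The one genuinely subtle point in the whole argument is the index bookkeeping in the decoding step — specifically that the coefficient nulled for user $\mathcal{R}_{s,i}$ is governed by the constraint set $\mathcal{P}^{(s)}_{i'}$ of the $i'$-th precoder and not by $\mathcal{P}^{(s)}_{i}$ — while the memory, subpacketization and DoF computations are routine consequences of C$1$ and C$2$.
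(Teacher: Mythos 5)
Your proposal is correct and follows essentially the same route as the paper's own argument (the construction in Section~\ref{sec-MAPDA} preceding Theorem~\ref{th-MAPDA-CCS}): placement read off from the stars via C$1$, zero-forcing precoders satisfying \eqref{eq-coding-vector} whose feasibility is guaranteed by C$4$, cache-cancellation of the remaining interference, and the count $\sum_s r_s=K(F-Z)$ for the sum-DoF. Your remark that the nulling of packet $i'$ at user $\mathcal{R}_{s,i}$ is governed by the constraint set $\mathcal{P}^{(s)}_{i'}$ of the $i'$-th precoder is the correct reading of the decoding step and is, if anything, stated more carefully than in the paper.
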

	In the literature, the schemes in~\cite{NMA,EP,SPSET,MB,STSK} can be represented by MAPDAs.
	
	Under the constraints of uncoded cache placement and one-shot linear delivery, the maximum sum-DoF is upper bounded than $\frac{KM}{N}+L$~\cite{EBPresolving}, which is also an upper bound  on the sum-DoF achieved by the caching schemes from MAPDA. 	
	\begin{theorem}[\cite{EBPresolving}]\label{th-DoF}
		The sum-DoF of an $(L,K,M,N)$ multiple antennas coded caching scheme with memory ratio $\frac{M}{N}=\frac{Z}{F}$ realized by  any $(L,K,F,Z,S)$ MAPDA  is no more than $\frac{KZ}{F}+L=\frac{KM}{N}+L$.
		\hfill $\square$
	\end{theorem}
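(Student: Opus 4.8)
The plan is to argue that any MAPDA-based scheme is itself a valid uncoded-placement, one-shot linear delivery scheme for the cache-aided MISO BC, and then simply invoke the converse of \cite{EBPresolving}. First I would observe that the placement specified in \eqref{eq-caching-content} is uncoded: user $k$ stores exactly the packets $\mathbf{W}_{n,f}$ indexed by the star positions in column $k$, so each cached symbol is a raw subfile bit. By Condition C$1$ the number of stars per column is $Z$, hence each user caches $ZN/F = MN/N \cdot (M/N)^{-1}\cdots$ — more precisely $ZN/F = M$ files, confirming the memory ratio $M/N = Z/F$ claimed in the statement. Second, I would check that the delivery described in the paragraph before Theorem~\ref{th-MAPDA-CCS} is one-shot and linear: in each block $s$ the server sends $\mathbf{X}(s) = \mathbf{V}^{(s)}\mathbf{W}^{(s)}$, a linear combination of uncoded requested subfiles, and every served user $\mathcal{R}_{s,i}$ decodes its wanted packet $\mathbf{W}_{d_{\mathcal{R}_{s,i}}, f_{\mathcal{R}_{s,i}}}$ from the \emph{single} received block-$s$ signal $\mathbf{y}_{\mathcal{R}_{s,i}}(s)$ together with its cache, with no inter-block processing (this was exactly the content of \eqref{eq-coding-vector} and the zero-forcing argument). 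Hence the scheme falls within the class covered by the converse of \cite{EBPresolving}.

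Third, the converse of \cite{EBPresolving} states that for this class the per-demand sum-DoF cannot exceed $\frac{KM}{N}+L$. The MAPDA scheme achieves sum-DoF $\frac{K(F-Z)}{S}$ for every demand vector by Theorem~\ref{th-MAPDA-CCS}; since this value is achievable and the converse is an upper bound on \emph{any} achievable value in the class, we get $\frac{K(F-Z)}{S} \le \frac{KM}{N}+L = \frac{KZ}{F}+L$, which is precisely the claimed inequality. One small point I would spell out: the worst-case demand vector (say all-distinct demands, or whatever maximizes the load) is the relevant one for the converse, and the MAPDA construction is symmetric enough that its sum-DoF is the same $\frac{K(F-Z)}{S}$ for every demand, so no subtlety arises from which demand we plug in.

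The only mild obstacle is a bookkeeping/definitional one rather than a mathematical one: I must make sure the MAPDA scheme genuinely satisfies \emph{all} hypotheses of the cited converse — in particular that ``one-shot linear'' in \cite{EBPresolving} means exactly what the block-by-block transmission in \eqref{T-sig}--\eqref{R-req} implements (each transmitted symbol a linear combination of coded subfiles, each user decoding from one channel use plus cache). Granting that the model in Section~\ref{sec-model} was set up precisely to match the setting of \cite{NMA,EBPresolving}, this is immediate, and the theorem follows with essentially no computation beyond rewriting $\frac{KZ}{F}+L = \frac{KM}{N}+L$ using $M/N = Z/F$.
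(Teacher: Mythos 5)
Your proposal is correct, but it takes a genuinely different route from the proof the paper actually writes out. You reduce the statement to the information-theoretic converse of \cite{EBPresolving} by verifying that any MAPDA-realized scheme has uncoded placement (stars index raw cached subfiles, $Z$ per column giving memory $M=ZN/F$) and one-shot linear delivery (each block sends $\mathbf{X}(s)=\mathbf{V}^{(s)}\mathbf{W}^{(s)}$ and each served user decodes from that single block plus its cache); since the theorem is attributed to that reference, this is a legitimate and essentially computation-free argument, and your handling of the demand-vector issue is fine. The paper, however, supplies a self-contained \emph{combinatorial} proof in Appendix~\ref{proof-DoF} that never touches the channel: it double-counts the stars ``used'' by the integer entries, obtaining $\sum_{s}r_s^2-\sum_{s}\sum_{i}r_{s,i}\leq\sum_{j}r'_j(K-r'_j)$ where $r_s$ is the multiplicity of integer $s$, $r_{s,i}$ the number of integer entries in row $i$ of $\mathbf{P}^{(s)}$, and $r'_j$ the number of integers in row $j$ of $\mathbf{P}$; it then applies Condition~C$4$ (which gives $\sum_i r_{s,i}\leq r_sL$) and convexity to conclude $S\geq \frac{nF}{FL+KF-n}$ with $n=(F-Z)K$, hence the DoF bound. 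The combinatorial route buys independence from the external converse --- it is a pure statement about arrays, needing no genericity of the channel --- and it exhibits the equality conditions ($r_{s,i}=L$ and all $r_s$, $r'_j$ equal), which guide the constructions in Section~\ref{sec-const}. Your route buys brevity and makes transparent that the bound is just an instance of the general one-shot-linear converse; the one caveat is that your chain of inequalities tacitly uses the achievability direction of Theorem~\ref{th-MAPDA-CCS} (that the scheme really attains sum-DoF $\frac{K(F-Z)}{S}$ over generic channels), which holds but should be stated, whereas the paper's counting argument does not need it.
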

In Appendix \ref{proof-DoF}, we provide an alternative proof for Theorem~\ref{th-DoF}, based on the	combinatorial structure of MAPDA.
	
	\section{Two novel constructions of MAPDA}\label{sec-const}
	\subsection{Main results and performance analysis}
 	In this  section, we propose two classes of MAPDAs which  lead to   coded caching schemes with the maximum sum-DoF in Theorem~\ref{th-DoF} and significantly reduce subpacketization compared to the state-of-the-art schemes.
		
	We first consider the case where $\frac{KM}{N}+L=K$, and propose the following MAPDA.
	\begin{theorem}\label{th-LT}
		For any positive integers $L$, $K$, $M$ and $N$ with $\frac{KM}{N}+L=K$, there exists a $(\frac{KM}{N}+L)$-$(L,K,K,\frac{KM}{N},K(1-\frac{M}{N}))$ MAPDA which leads to an $(L,K,M,N)$ multiple antennas coded caching scheme with sum-DoF   $\frac{KM}{N} +L=K$ and subpacketization $F=K$.
		\hfill $\square$
	\end{theorem}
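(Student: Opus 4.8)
The plan is to produce the array explicitly as a relabelling of a cyclic Latin square of order $K$, verify Conditions C$1$--C$4$ of Definition~\ref{def-MAPDA} directly, and then read off the caching scheme and its parameters from Theorem~\ref{th-MAPDA-CCS}, with the optimality of the sum-DoF coming from Theorem~\ref{th-DoF}. Throughout write $t:=\tfrac{KM}{N}$, so that $t+L=K$ and $L=K-t$.

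First I would fix the construction. Index the $K$ rows and $K$ columns by $\{0,1,\dots,K-1\}$ and set $\mathbf{P}(f,k)=\ast$ whenever $(f-k)\bmod K<t$, and $\mathbf{P}(f,k)=\big((f-k-t)\bmod K\big)+1$ otherwise. Equivalently, this array is obtained from the cyclic Latin square $\mathbf{L}(f,k)=\big((f-k)\bmod K\big)+1$ of order $K$ (which exists, and is a Latin square since each symbol occurs once per row and once per column) by turning the symbols $1,\dots,t$ into the star and relabelling the symbols $t+1,\dots,K$ as the integers $1,\dots,L$. The star positions in column $k$ are exactly the $t$ cyclically consecutive rows $k,k+1,\dots,k+t-1$, i.e.\ the cyclic cache placement, and the integer alphabet is $[L]$, so $S=L=K(1-\tfrac{M}{N})$.

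Next I would check the conditions. C$1$ and the identity $Z=t$ hold because each of the symbols $1,\dots,t$ occurs once per column of $\mathbf{L}$; the per-column uniqueness of each of the remaining symbols gives C$3$ and also shows that every column of $\mathbf{P}$ has exactly $F-Z=L$ integer entries. C$2$ and $(t+L)$-regularity hold because each relabelled symbol occurs $K$ times in $\mathbf{L}$. The crux is C$4$: since each of the symbols $t+1,\dots,K$ occurs exactly once in \emph{every} row of $\mathbf{L}$, each integer $s\in[L]$ appears in every row and every column of $\mathbf{P}$, so the subarray $\mathbf{P}^{(s)}$ formed by the rows and columns meeting $s$ is all of $\mathbf{P}$; it then suffices to bound the number of integer entries per row of $\mathbf{P}$ itself, and by the row-version of the same argument every row of $\mathbf{P}$ has exactly $t$ stars, hence exactly $K-t=L$ integer entries, so the bound in C$4$ holds with equality. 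This establishes that $\mathbf{P}$ is a $(t+L)$-regular $(L,K,K,t,L)$ MAPDA, as claimed.

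Finally, applying Theorem~\ref{th-MAPDA-CCS} to $\mathbf{P}$ yields an $F=K$-division $(L,K,M,N)$ scheme with memory ratio $Z/F=t/K=M/N$, subpacketization $F=K$, and sum-DoF $\tfrac{K(F-Z)}{S}=\tfrac{KL}{L}=K=\tfrac{KM}{N}+L$, which by Theorem~\ref{th-DoF} is maximal. I expect the only delicate point to be the verification of C$4$: the key observation is precisely that the Latin-square property forces $\mathbf{P}^{(s)}=\mathbf{P}$ for every $s$, so that in each block every served user must zero-force exactly $L-1$ interferers using $L$ antennas --- feasible with exactly enough spatial dimensions, which is why the tight case $l^{(s)}_i=L$ causes no difficulty in the delivery of Theorem~\ref{th-MAPDA-CCS}. (Any Latin square of order $K$ would make the verification go through; taking the cyclic one is what realizes the ``cyclic cache placement'' advertised in the statement.)
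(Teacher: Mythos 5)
Your proposal is correct and follows essentially the same route as the paper: take a Latin square of order $K$, turn $t=\frac{KM}{N}$ of its symbols into stars, and verify C$1$--C$4$, with the key point for C$4$ being that $\mathbf{P}^{(s)}=\mathbf{P}$ and each row has exactly $L$ integer entries. Your explicit cyclic parametrization and the final appeal to Theorem~\ref{th-MAPDA-CCS} are just slightly more detailed versions of what the paper does.
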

	
	\begin{proof}
		Consider a $K\times K$ Latin square $\mathbf{L}$, with elements in $[K]$. Replace the integers $[L+1:K]$ in $\mathbf{L}$ by stars. The resulting array $\mathbf{P}$ is an MAPDA with parameters $(\frac{KM}{N}+L)$-$(L,K,F=K,Z=\frac{KM}{N},S=K(1-\frac{M}{N}))$. This is proved by showing that Conditions C$1$-C$4$ of Definition \ref{def-MAPDA} are satisfied with these parameters. In particular,
		there are $S=K-\frac{KM}{N}$ different integers, each of which appears exactly $K$ times. In addition,  $Z=\frac{KM}{N}$ stars appear in each row and each column. Thus $\mathbf{P}$ satisfies Conditions C$1$ and C$2$ of Definition \ref{def-MAPDA}.
		Since each integer appears once in each column of Latin square, $\mathbf{P}$ satisfies Condition C$3$ of Definition \ref{def-MAPDA}.
		For each integer $s\in [S]$, we have $\mathbf{P}^{(s)}=\mathbf{P}$ always holds.
		So by the structure of Latin square, Condition C$4$ of Definition \ref{def-MAPDA} is also satisfied.
		Thus $\mathbf{P}$ is a $(\frac{KM}{N}+L)$-$(K,K,L,\frac{KM}{N},K-\frac{KM}{N})$ MAPDA.
	\end{proof}	
	For instance, we can obtain the following $5$-$(2,5,5,3,3)$ MAPDA $\mathbf{P}$ by replacing the integers $3$, $4$ and $5$ of Latin square $\mathbf{L}$ in \eqref{LT-square} by stars,  after some rows permutations in order to evidence the cyclic nature of the placement.
		\begin{eqnarray*}
			\mathbf{P}=\left(
			\begin{array}{ccccc}
				1&2&*&*&*\\
				*&1&2&*&*\\
				*&*&1&2&*\\
				*&*&*&1&2\\
				2&*&*&*&1
			\end{array}
			\right)_{5\times 5}.
	\end{eqnarray*}

	Note that our scheme in Theorem \ref{th-LT} achieves the same sum-DoF and subpacketization level as the scheme in \cite{SPSET} but with the constraint
	$\frac{KM}{N}+L=K$, instead of the constraint $L\geq \frac{KM}{N}$ in \cite{SPSET}.

  	We turn now to the general case of $K$, $M$, $N$ and $L$ and given our second main result, which consists of a non-trivial transformation of a g-regular PDA into an MAPDA, as stated in the following theorem, the proof of which is provided in Section \ref{proof-regular}.
	\begin{theorem}\label{th-MAPDA-regular}	
		Given any $g$-$(K_1,F_1,Z_1,S_1)$ PDA, there exists an $(L,mK_1,\alpha F_1,\alpha Z_1,\text{sgn}(g)lS_1)$ MAPDA where $m\leq L$, which leads to an $(L,K=mK_1,M,N)$ multiple antennas coded caching scheme with memory ratio $\frac{M}{N}=\frac{Z_1}{F_1}$, sum-DoF $m(g-1)+L$ and subpacketization $F=\alpha F_1$, where $\alpha=(\text{sgn}(g)+\frac{L-m}{m})l$, $l=\frac{m}{\text{gcd}(m,L-m)}$, and
		\begin{eqnarray}\label{eq-sgn(g)}
			\text{sgn}(g) := \begin{cases}
				1,&\text{if }L=m;\\
				g,&\text{otherwise}.
			\end{cases}
		\end{eqnarray}
		\hfill $\square$
	\end{theorem}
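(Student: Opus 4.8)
The plan is to start from a $g$-$(K_1,F_1,Z_1,S_1)$ PDA $\mathbf{Q}$ and build an MAPDA by a two-stage "inflation" of both dimensions. First I would handle the two regimes of $\operatorname{sgn}(g)$ separately, since when $L=m$ the construction should reduce to the scheme of~\cite{EP} (stacking $m$ copies of the PDA along a block-diagonal and using all $L=m$ antennas to zero-force within each block), whereas when $L>m$ the extra $L-m$ antennas must be used to serve users across different copies. For the general case I would (i) make $m$ vertically "interleaved" copies of $\mathbf{Q}$ to go from $K_1$ to $mK_1$ columns, assigning copy-index $j\in[m]$ to user $(j-1)K_1+k$; (ii) tile the $F_1$ rows into $\alpha=(\operatorname{sgn}(g)+\frac{L-m}{m})l$ row-blocks, where the parameter $l=m/\gcd(m,L-m)$ is chosen precisely so that a cyclic shift of the copy-indices by $L-m$ positions returns to the identity after $l$ steps; and (iii) relabel the integers so that an integer $s$ of the original PDA, appearing in copy $j$ and row-block $b$, becomes a new integer that also encodes the pair $(j,b)$ up to the cyclic identification. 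The total number of distinct new integers should come out to $\operatorname{sgn}(g)\,l\,S_1$.

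The next step is to verify the four MAPDA axioms for the relabelled array. Conditions C1 and C2 are bookkeeping: each column still has $\alpha Z_1$ stars because we replicated the star-pattern of $\mathbf{Q}$ in each row-block and never introduced or deleted stars, and every new integer is used by construction. Condition C3 (each integer at most once per column) follows because within a fixed copy the original PDA already has each integer at most once per column (it appears in distinct rows), and the row-block tagging keeps occurrences in a single column of a single copy from colliding. The real content is Condition C4: for each new integer $s$, the subarray $\mathbf{P}^{(s)}$ on the rows and columns containing $s$ must have at most $L$ integer entries in every row. Here I would argue that the occurrences of a fixed new integer come from (a) the $g$ occurrences of the underlying old integer inside one copy, contributing a $g$-dimensional "PDA block" that behaves like a $g$-regular PDA row (so $g-1$ interference terms plus the desired one, i.e. $g$ integer entries per row when $L>m$, matching the $m(g-1)+L$ DoF after the cross-copy merging), together with (b) the cross-copy occurrences produced by the cyclic shift, which add exactly $L-m$ further integer entries per row when spread over the $m$ copies — giving $g+(L-m) \le L$... wait, that bound needs the merging to be counted correctly, so more precisely I would show each row of $\mathbf{P}^{(s)}$ sees at most $L$ integers by a direct count: the users served by $s$ split into $m$ groups of size at most $g$ coming from the $m$ copies, and the star-structure inherited from $\mathbf{Q}$ together with the block-cyclic labelling ensures that in the row indexed by a given served user, the only non-star entries are the $g-1$ "same-copy" partners (killed by PDA property C3b of the original) plus the one entry per other copy that the cyclic shift aligns — totalling $(g-1)+1+(m-1)\cdot\frac{L-m}{m-1}$, which I would massage to $\le L$. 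This counting is where I expect the main difficulty.

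Once C1–C4 are established, the parameter tallies are immediate: $F=\alpha F_1$, $Z=\alpha Z_1$, $K=mK_1$, $S=\operatorname{sgn}(g)lS_1$, and Theorem~\ref{th-MAPDA-CCS} converts the MAPDA into an $(L,K,M,N)$ scheme with memory ratio $\frac{M}{N}=\frac{Z}{F}=\frac{Z_1}{F_1}$ and sum-DoF $\frac{K(F-Z)}{S}=\frac{mK_1\alpha(F_1-Z_1)}{\operatorname{sgn}(g)lS_1}$, which I would simplify using the regularity identity $g S_1=K_1(F_1-Z_1)$ for the original PDA to get exactly $m(g-1)+L$ (and separately check the $L=m$ case, where $\operatorname{sgn}(g)=1$, $\alpha=l=1$, recovering the~\cite{EP} scheme with sum-DoF $m(g-1)+m=mg$). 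The hard part throughout is designing the cyclic row-block labelling so that C4 holds with equality in the worst case while keeping $S$ as small as $\operatorname{sgn}(g)lS_1$; I would therefore present the construction explicitly first, then prove C4 by the careful row-by-row count sketched above, relegating the arithmetic simplification of the DoF to the end.
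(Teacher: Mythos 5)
Your high-level architecture matches the paper's: replicate the PDA $m$ times horizontally, add a second vertically-replicated-and-shifted block to supply the extra $L-m$ occurrences of each integer, relabel integers so that the total count is $\operatorname{sgn}(g)lS_1$, and recover the DoF via Theorem~\ref{th-MAPDA-CCS} together with the regularity identity $gS_1=K_1(F_1-Z_1)$ (that last computation is correct and is essentially how the paper tallies parameters). However, the step you yourself flag as the main difficulty --- Condition C$4$ --- is genuinely unresolved in your write-up, and your trial count fails. You arrive at $(g-1)+1+(L-m)=g+L-m$ integer entries in a row of $\mathbf{P}^{(s)}$, which exceeds $L$ whenever $g>m$; this is exactly the regime the theorem is supposed to cover (e.g.\ the MN PDA with $t_1+1>m$). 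The paper closes this gap not by a per-row count of ``partners'' but by a star-counting lemma (Lemma~\ref{lem-different-column-times}): (i) the columns in which $s$ appears in the second block are disjoint from those of the first block, so $\mathbf{P}^{(s)}$ has exactly $mg+(L-m)=m(g-1)+L$ columns; (ii) every row of $\mathbf{P}_1^{(s)}$ has exactly $m(g-1)$ stars; and (iii) every row of $\mathbf{P}_2$ containing $s$ has its stars in the \emph{same column positions} as some row of $\mathbf{P}_1$ containing $s$. Points (ii)--(iii) give at least $m(g-1)$ stars in every row of $\mathbf{P}^{(s)}$, hence at most $L$ integer entries. Property (iii) is not automatic: it is engineered by the specific shift the paper uses.

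That points to the second concrete issue: your shifting mechanism is different from the paper's and it is not clear it delivers (i) or (iii). You cyclically shift \emph{copy indices} across the $m$ horizontal copies; the paper instead cyclically right-shifts the integers of each row of $\mathbf{Q}$ \emph{within that row's integer positions}, leaving every star where it is. Because stars never move, (iii) holds by construction (after matching the integer relabelling between the two blocks, which is why $l=m/\gcd(m,L-m)$ is chosen to make $l(L-m)/m$ an integer --- the number of vertical replicas of each row --- rather than as the order of a copy-index permutation). Because an integer is moved into a position previously occupied by a \emph{different} integer of the same row, its new column is automatically distinct from any column where it appears in the first block, which is what gives (i) and hence C$3$ for the concatenated array. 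Your cross-copy shift preserves the column (within a copy) of each occurrence, so occurrences of $s$ in your second block would land in the same columns as in the first block, violating C$3$. To repair the proposal you would need to replace the copy-index shift by a within-row integer shift (or prove analogues of (i) and (iii) for your labelling), and then run the star-count argument instead of the partner count.
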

	
	From Theorem \ref{th-MAPDA-regular}, in order to obtain an MAPDA, we only need to construct an appropriate PDA. For instance, starting from the $(t_1+1)$-$(K_1,\binom{K_1}{t_1},\binom{K_1-1}{t_1-1},\binom{K_1}{t_1+1})$ MN PDA and applying the construction of Theorem \ref{th-MAPDA-regular}, we obtain the following result:
	
	\begin{theorem}\label{th-MAPDA-MN}	
		For any positive integers $m$, $L$, $K_1$ and $t_1$  with $t_1<K_1$ and $m\leq L$, there exists an $(L$, $mK_1$, $\alpha {K_1\choose t_1}$, $\alpha {K_1-1\choose t_1-1}$, $\text{sgn}(t_1+1)l{K_1\choose t_1+1})$ MAPDA which leads to an $(L,K=mK_1,M ,N )$ multiple antennas coded caching scheme with the memory ratio $\frac{M}{N}=\frac{t_1}{K_1}$, sum-DoF   $\frac{KM}{N}+L$, and subpacketization $F=\alpha {K_1\choose t_1}$, where $\alpha =(\text{sgn}(t_1+1)+\frac{L-m}{m})l$ and $l=\frac{m}{\text{gcd}(m,L-m)}$.
		
		\hfill $\square$
	\end{theorem}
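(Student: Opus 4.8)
The plan is to obtain Theorem~\ref{th-MAPDA-MN} as a direct instantiation of Theorem~\ref{th-MAPDA-regular}, using the MN PDA of Lemma~\ref{le-MN} as the input $g$-regular PDA. Concretely, Lemma~\ref{le-MN} supplies, for any $t_1<K_1$, a $(t_1+1)$-regular PDA with parameters $(K_1,F_1,Z_1,S_1)=\bigl(K_1,\binom{K_1}{t_1},\binom{K_1-1}{t_1-1},\binom{K_1}{t_1+1}\bigr)$, i.e.\ a $g$-$(K_1,F_1,Z_1,S_1)$ PDA with $g=t_1+1$. Feeding this into Theorem~\ref{th-MAPDA-regular} immediately produces, for every $m\le L$, an $(L,mK_1,\alpha F_1,\alpha Z_1,\text{sgn}(g)lS_1)$ MAPDA, where $l=\frac{m}{\text{gcd}(m,L-m)}$, $\alpha=(\text{sgn}(g)+\frac{L-m}{m})l$, and $\text{sgn}(g)=\text{sgn}(t_1+1)$ equals $1$ when $L=m$ and equals $t_1+1$ otherwise. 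By Theorem~\ref{th-MAPDA-CCS} this MAPDA yields the corresponding $F$-division MISO coded caching scheme.

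First I would substitute the MN parameters into the conclusion of Theorem~\ref{th-MAPDA-regular}, obtaining an MAPDA with parameters $\bigl(L,\ mK_1,\ \alpha\binom{K_1}{t_1},\ \alpha\binom{K_1-1}{t_1-1},\ \text{sgn}(t_1+1)l\binom{K_1}{t_1+1}\bigr)$ and subpacketization $F=\alpha\binom{K_1}{t_1}$, which is exactly the claim. Next I would check the memory ratio: $\frac{M}{N}=\frac{Z_1}{F_1}=\binom{K_1-1}{t_1-1}\big/\binom{K_1}{t_1}=\frac{t_1}{K_1}$, as asserted. Finally I would verify that the sum-DoF is optimal: Theorem~\ref{th-MAPDA-regular} gives sum-DoF $m(g-1)+L=mt_1+L$, and since $K=mK_1$ together with $\frac{M}{N}=\frac{t_1}{K_1}$ gives $\frac{KM}{N}=mt_1$, this equals $\frac{KM}{N}+L$, matching the upper bound of Theorem~\ref{th-DoF}. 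So the achieved scheme is sum-DoF optimal, and all that remains is routine parameter bookkeeping.

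There is no genuine obstacle in this proof: the entire technical content lives in Theorem~\ref{th-MAPDA-regular}, whose proof is deferred to Section~\ref{proof-regular}. The only point deserving a word of care is that the transformation requires a \emph{regular} PDA, and the MN PDA is indeed $(t_1+1)$-regular by Lemma~\ref{le-MN} (each multicast label appears exactly $t_1+1$ times), so the hypotheses of Theorem~\ref{th-MAPDA-regular} are met verbatim; the constraint $m\le L$ and $t_1<K_1$ pass through unchanged. It is also worth remarking, although not needed for the proof, that setting $m=L$ forces $\text{sgn}(t_1+1)=1$, $l=1$, $\alpha=1$, hence subpacketization $\binom{K_1}{t_1}=\binom{K/L}{(KM/N)/L}$, recovering the scheme of~\cite{EP} as the special case $m=L$.
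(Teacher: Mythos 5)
Your proof is correct and is exactly the route the paper takes: Theorem~\ref{th-MAPDA-MN} is stated in the paper as an immediate corollary of Theorem~\ref{th-MAPDA-regular} applied to the $(t_1+1)$-regular MN PDA of Lemma~\ref{le-MN}, with the same parameter substitutions you carry out (in particular $Z_1/F_1=t_1/K_1$ and $m(g-1)+L=mt_1+L=\frac{KM}{N}+L$). The only content beyond bookkeeping lives in Theorem~\ref{th-MAPDA-regular}, which you correctly defer to, so nothing is missing.
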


	\begin{remark}[Two extreme cases in Theorem~\ref{th-MAPDA-MN}]
		\label{re-special-case}
		When $m=L$, we have $\text{sgn}\left(t_1+1\right)=1$, $l=1$, and $\alpha =1$ in Theorem \ref{th-MAPDA-MN}. Then we can obtain a $(L,K,M,N)$ multiple antennas coded caching scheme with the number of users $K=mK_1$, the memory ratio $\frac{M}{N}=\frac{t_1}{K_1}$, subpacketization $F={K_1\choose t_1}$, and sum-DoF $\frac{KM}{N}+L$.
		In other words, under the same constraint as in \cite{EP} (i.e., $L$ divides $K$ and $\frac{KM}{N}$), our scheme is exactly the scheme  in \cite{EP}.

  		When $m=1$, we obtain an  $(L, K ,M ,N )$ multiple antennas coded caching scheme without any constraints, whose subpacketization has the same exponential order as
  		$\binom{K}{KM/N}$  in terms of $K$. Note that  the coded caching scheme in \cite{MB} achieves  the sum-DoF  $\frac{KM}{N}+L$ with subpacketization $\binom{K}{KM/N}$ under the constraint that $\frac{KM}{N}+1$ divides $\frac{KM}{N}+L$.
		\hfill $\square$
	\end{remark}

	Note that 	when  $L\geq \frac{KM}{N}$, the scheme in   \cite{SPSET} achieves the maximum sum-DoF in Theorem~\ref{th-DoF} with linear subpacketization with $K$, which is hard to be further reduced. The main contribution of Theorem~\ref{th-MAPDA-MN} is to reduce the subpacketzation for the case  $L< \frac{KM}{N}$, while achieving the maximum sum-DoF.
	In the following, we compare the performance of Theorem~\ref{th-MAPDA-MN} with the schemes in \cite{NMA,SCH}, since the comparison to   schemes in \cite{EP,MB} has been discussed in Remark~\ref{re-special-case}.
	Furthermore, it can be seen from Table \ref{tab-compare}, the subpacketization of scheme in \cite{SCH} is the less than that of \cite{NMA}, and thus we only need to compare the subpacketization of our scheme in Theorem \ref{th-MAPDA-MN} (denoted by $F_{\text{Th}_5}$) with the scheme  in \cite{SCH} (denoted by $F_{\text{SCH}}$).   The ratio of the subpacketizations $F_{\text{SCH}}$ and $F_{\text{Th}_5}$ is
	\begin{align*}
		\frac{F_{\text{SCH}}}{F_{\text{Th}_5}}=\frac{\binom{K}{t}\binom{K-t-1}{L-1}}{\alpha\binom{K/m}{t/m}}
		&\approx
		\frac{1}{\alpha}\cdot\frac{2^{KH(\frac{t}{K})}2^{(K-t-1)H(\frac{L-1}{K-t-1})}}{2^{\frac{K}{m}H(\frac{t}{K})}}\ \ \ \ (K\rightarrow \infty)\\
		&=\frac{1}{\alpha}\cdot 2^{\frac{(m-1)}{m}KH(\frac{t}{K})+(K-t-1)H(\frac{L-1}{K-t-1})}.
	\end{align*}

	We then provide some numerical evaluations to compare the proposed scheme  in Theorem \ref{th-MAPDA-MN} with the schemes in \cite{SCH,MB}. In Fig.~\ref{fig-performance1}, we consider the  multiple antennas coded caching problem with $K=100$ and $L=7$. In this case, we have $m=5$ for our scheme in Theorem \ref{th-MAPDA-MN}. In Fig.~\ref{fig-performance1}, We do not plot the scheme in \cite{MB}  since it can only work when $\frac{t+L}{t+1} $ is an integer, i.e., only when $t=5$. In this case, the subpacketization of the scheme in \cite{MB} is $7.53\times 10^{7}$, while our scheme in Theorem \ref{th-MAPDA-MN} is $240$.
	In addition, as illustrated in Fig.~\ref{fig-performance1}, compared with the scheme in \cite{SCH} our scheme in Theorem \ref{th-MAPDA-MN} has a significant reduction in subpacketization.
	\begin{figure}
		\centering
		\includegraphics[width=5in]{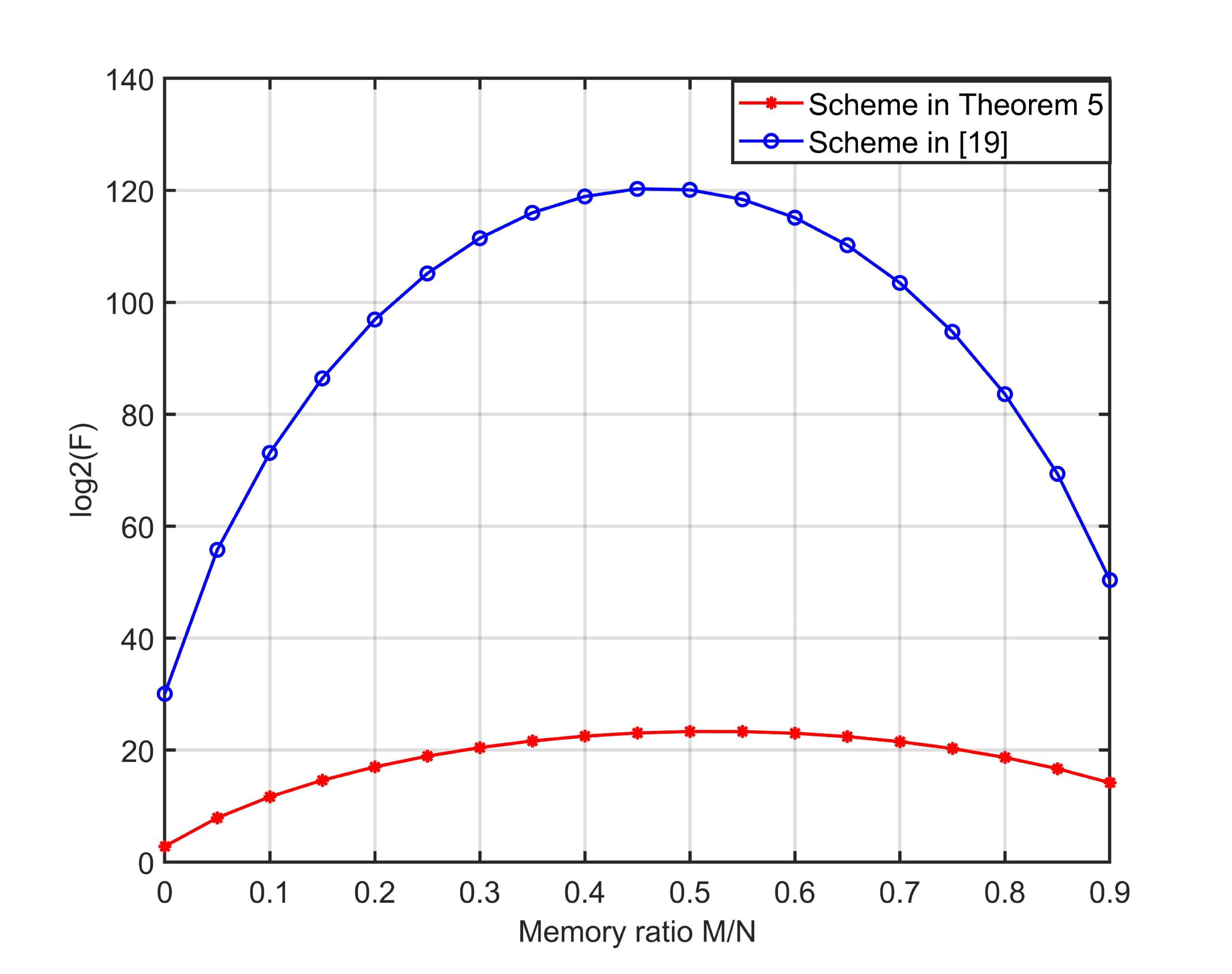}
		\caption{ The subpacketization versus memory ratio for schemes in Theorem \ref{th-MAPDA-MN} and in \cite{SCH}, when $K=100$ and $L=7$.}
		\label{fig-performance1}
	\end{figure}
	
	 \begin{remark}[Extension to multiple cache-aided transmitters wireless channels]
		\label{re-extension}
		The coded caching problem for multi-transmitter wireless interference networks was   considered in \cite{NMA}, where $K_T$ cache-aided single-antenna transmitters (with memory size $M_T$) are connected to $K$ cache-aided users (with memory size $M$) through a wireless interference network.
		It can be seen that when $M_T=N$, it reduces to the cache-aided MISO broadcast   problem in our paper. In addition,
		it was shown in \cite{EP} that when $K_T M_T \geq 1$, any cache-aided MISO scheme could be extended to the   multi-transmitter wireless interference networks by using a cyclic cache placement at the transmitters. By the same approach as in \cite{EP}, we can also extend the proposed schemes in
		Theorems~\ref{th-LT}  and~\ref{th-MAPDA-MN} to achieve the sum-DoF $\frac{K_TM_T}{N}+\frac{KM}{N}$, while the subpacketizations are with a linear order of those in   Theorems~\ref{th-LT}  and~\ref{th-MAPDA-MN}.
		\hfill $\square$
	\end{remark}

 	\begin{remark}
		\label{re-multiserver}
		As already noted in~\cite{SSB}, analogous results in term of the network load (number of equivalent file transmissions to satisfy all users' requests~\cite{MN}) are immediately obtained for a noiseless linear network over a sufficiently large finite field, as induced by end-to-end linear network coding.
		Consider a network of general topology, connecting $ L$ servers and $K$ users. If routing in the intermediate notes is replaced by linear network coding (e.g., each intermediate node forwards random linear combinations of the incoming packets~\cite{1705002}), for sufficiently large  finite field size the resulting $L$-input $K$-output network can be represented as a $K \times L$ matrix of rank $\min\{K,L\}$. Then, the very same ideas can be applied by replacing linear combinations over the complex field with linear combinations over the appropriate finite field. This represents an attractive and general ``separation'' approach between caching and networking, when the routing layer is replaced by linear network coding.
		\hfill $\square$
	\end{remark}
	\subsection{Sketch of the proposed scheme in Theorem \ref{th-MAPDA-MN} }
	\label{sub:sketch}
	Given a $(t_1+1)$-$(K_1,F_1,Z_1,S_1)$ MN PDA $\mathbf{Q}$, for any positive integers $m$ and $L$ with $m\leq L$ and $t_1<K_1$, we will construct an $(L+mt_1)$-$(L,mK_1$, $\alpha F_1$, $\alpha Z_1$, $\text{sgn}(t_1+1)lS_1)$ MAPDA $\mathbf{P}$.
	The main idea of construction is replicating an $K_1$-user MN  PDA   $m$ times, such that each integer appears $m(t_1+1)$ times. If $m(t_1+1)<L+mt_1$, i.e, the achieved sum-DoF is less than  $L+mt_1$ (which is the task sum-DoF we want to achieve), we then  replicate $\mathbf{Q}$ and adjust some integers such that each integer appears $L+mt_1$ times and all Conditions of Definition \ref{def-MAPDA} hold.
	Then we will propose an example to illustrate the detail steps.

	\begin{example}\label{ex-3}
		We focus on the base array $\mathbf{Q}$, which is a $(t_1+1)\text{-}(K_1,F_1,Z_1,S_1)=3$-$(4,6,3,4)$ PDA with $4$ users, memory ratio $\frac{M_1}{N_1}=\frac{Z_1}{F_1}=\frac{1}{2}$ and subpacketization $F_1=6$, where $t_1=2$.
		We will transform $\mathbf{Q}$ to an $(mt_1+L)$-$(L,mK_1,F,Z,S)=7\text{-}(3,8,42,21,24)$ MAPDA $\mathbf{P}$ through the following three steps where $m=2$, as illustrated in Fig.~\ref{fig-ex1}.
	\begin{figure}
		\centering
		\includegraphics[width=7in]{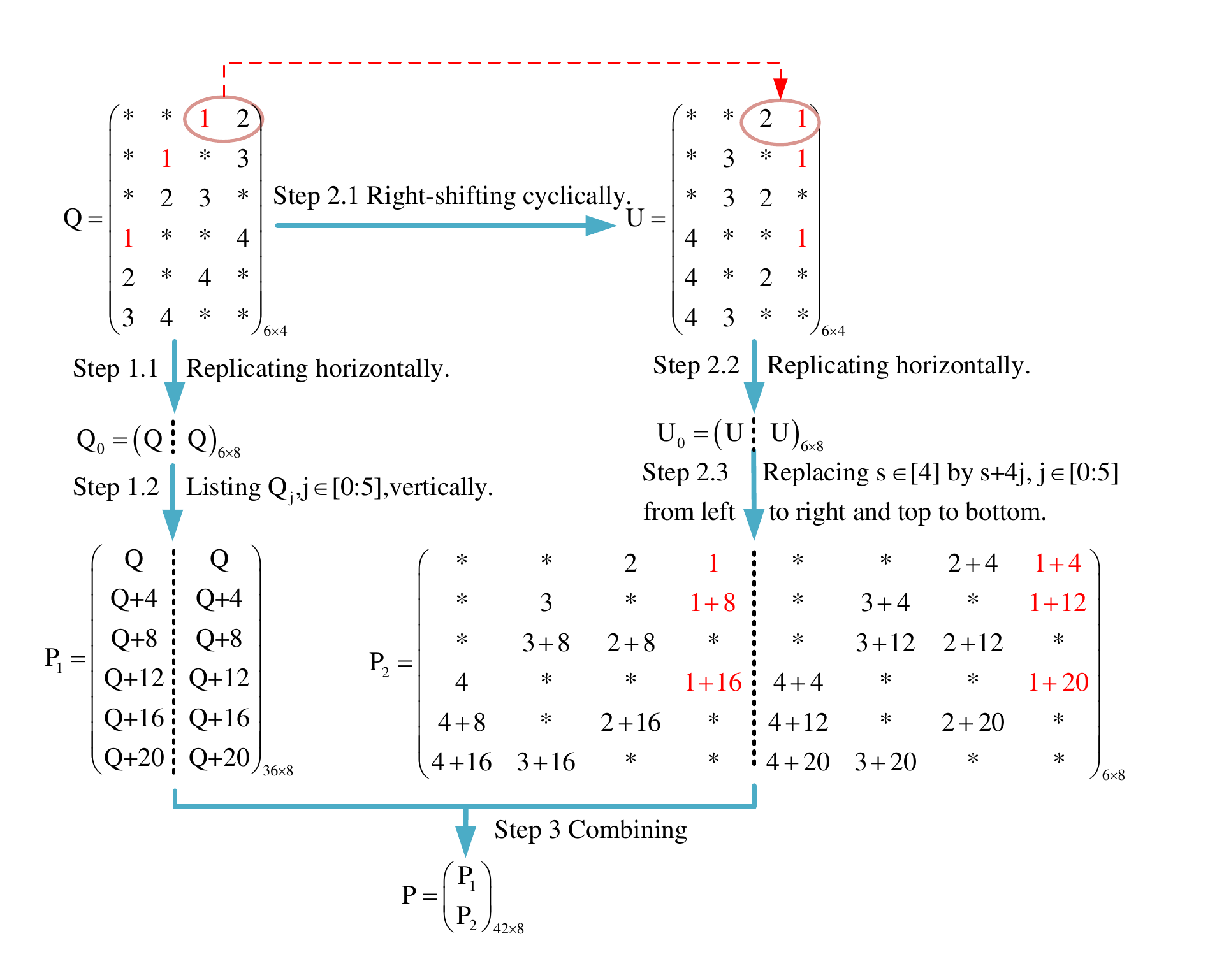}
		\caption{ The transformation from $\mathbf{Q}$ to $\mathbf{P}$ in Theorem \ref{th-MAPDA-MN}.}
		\label{fig-ex1}
	\end{figure}		
	\begin{itemize}
		\item \textbf{Step 1.}  Construction of $\mathbf{P}_1$ from $\mathbf{Q}$.
		\begin{itemize}
			\item \textbf{Step 1.1.}
			We first get a $6\times 8$ array $\mathbf{Q}_0$ by replicating $\mathbf{Q}$   $m=2$ times horizontally.
			\item \textbf{Step 1.2.}
			We then get a $36\times 8$ array $\mathbf{P}_1$ by listing $\mathbf{Q}_0 +4j$ for each $0\leq j<m(t_1+1)=6$. We can see that  there are $S=S_1 m(t_1+1)=24$ different integers in $\mathbf{P}_1$; each integer $s\in[24]$ appears $m(t_1+1)= 6$ times and in distinct columns; each column has $m(t_1+1)Z_1=18$ stars; each row has $mt_1= 4$ stars. Furthermore, the number of integer entries in each row of $\mathbf{P}^{(s)}_1$, where $s\in [24]$, is equal to $ 2<L=3$.
			Thu	all Conditions of Definition \ref{def-MAPDA} are satisfied; $\mathbf{P}_1$ is an MAPDA.
		\end{itemize}
	
		However, we aim to achieve the maximum sum-DoF 	$L+mt_1=7>m(t_1+1)=6$, i.e., each integer should appear $7 $ times in the MAPDA, which motivates the following steps.
		\item \textbf{Step 2.}  Construction of $\mathbf{P}_2$ from $\mathbf{Q}$.
		\begin{itemize}
			\item \textbf{Step 2.1.}
			We replicate $\mathbf{Q}$, $L+mt_1- m(t_1+1) =L-m$ time(s) in this step, such that together with the above step we can achieve the sum-DoF
			$L+mt_1$. In this example, since $L-m=1$, we only need to replicate  $\mathbf{Q}$ once in this step.
			For each row of  $\mathbf{Q}$, we cyclically right-shift  the integer entries by one position, while the stars remain unchanged. The resulting array (denoted by $\mathbf{U}$) has the same dimension as $\mathbf{Q}$. For example  in the first row,    integer $1$ is shifted into the last column, while   integer $2$ is shifted into the $3^{\text{th}}$ column (which is the original column of   integer $1$). It will be clear that this step is used to guarantee Condition C$3$ of Definition \ref{def-MAPDA}.
		
			\item \textbf{Step 2.2.}  $\mathbf{U}_0$ is obtained by replicating   $\mathbf{U}$   $m=2$ times  horizontally. Each integer $s\in[4]$ appears $(t_1+1)=3$ times in $\mathbf{U}$, and thus appears $2(t_1+1)=6$ in     $\mathbf{U}_0$.
			
			\item \textbf{Step 2.3.} By replacing the six integer $s\in[4]$ by $s$, $s+4$, $s+2\times 4$, $s+3\times 4$, $s+4\times 4$, $s+5\times 4$, respectively from left to right and top to bottom, we get a new $6\times 8$ array $\mathbf{P}_2$. For example, the $6$ integer $1$'s in $\mathbf{U}_0$ are replaced by integers $1$, $1+4$, $1+8$, $1+12$, $1+16$, $1+20$, respectively. There are $24$ different integers and each integer appears  once.
			Thus $\mathbf{P}_2$ is also an MAPDA.
		\end{itemize}
		
		\item $\textbf{Step 3.}$  Construction of $\mathbf{P}$ from $\mathbf{P}_1$ and $\mathbf{P}_2$.
		We obtain the $\mathbf{P}$ with $F=7  F_1 =42$ rows and $K= 2  K_1 =8$ columns by concatenating $\mathbf{P}_1$ and $\mathbf{P}_2$ vertically. Next we will show that $\mathbf{P}$ is an MAPDA.
		
		From the above steps, $\mathbf{P}_1$ and $\mathbf{P}_2$ both are MAPDA; thus $\mathbf{P}$ satisfies Conditions C$1$ and C$2$ of Definition \ref{def-PDA}.
		The column indices where each integer $s\in[24]$ lies in of $\mathbf{P}_2$ are different from that of $\mathbf{P}_1$ by the constructions in Steps 2.1 and  2.3; thus $\mathbf{P}$ satisfies Conditions C$3$ of Definition \ref{def-MAPDA}.
		Finally we focus on the $\mathbf{P}^{(s)}$ and verify Condition C$4$ of Definition \ref{def-MAPDA}.
		Let us take  $\mathbf{P}^{(1)}$ as an example,
		$$\mathbf{P}^{(1)}=\left(
		\begin{array}{ccccccc}
			*&	*&	\textcolor{red}{1}&	2&	*&	*&	\textcolor{red}{1}\\
			*&	\textcolor{red}{1}&	*&	3&	*&	\textcolor{red}{1}&	*\\
			\textcolor{red}{1}& *&	*&	4&	\textcolor{red}{1}&	*&	*\\
			*&	*&	2&	\textcolor{red}{1}&	*&	*&	6
		\end{array}
		\right)_{4\times 7}.
		$$
		By the construction, each row of $\mathbf{P}^{(1)}$ has $m$ integer $1$'s and $L-m$ other integer(s), totally $L $ integers. Thus  Condition C$4$ of Definition \ref{def-MAPDA} is satisfied.
	\end{itemize}
	In conclusion,	$\mathbf{P}$ is a $7$-$(3,8,42,21,24)$ MAPDA, which
	leads to a $(L,mK_1,M,N)=(3,8,4,8)$ multiple antennas coded caching scheme with $mK_1=8$ users, memory ratio $\frac{M}{N}=\frac{1}{2}$,  sum-DoF $mt_1+L=7$, and subpacketization $F=42$. In this example, the schemes in ~\cite{SCH,NMA} achieve the sum-DoF $mt_1+L=7$ with subpacketizations $10080$ and $210$ respectively.
	\hfill $\square$ 	
\end{example}

	\section{Proof of Theorem~\ref{th-MAPDA-regular}}
	\label{proof-regular}	
	Given a $g$-$(K_1,F_1,Z_1,S_1)$ PDA $\mathbf{Q}$, we will construct an $(m(g-1)+L)$-$(L,K=mK_1,F,Z,S)$ MAPDA $\mathbf{P}$ for any positive integers $m\leq L$ such that $Z/F=Z_1/F_1$. We first obtain an $F_1\times mK_1$ array $\mathbf{Q}_0$ by replicating $\mathbf{Q}$   $m$ times horizontally, i.e.,
	\begin{eqnarray*}
		\mathbf{Q}_0=\left( \underbrace{\mathbf{Q},\ldots,\mathbf{Q}}_{m}\right).
	\end{eqnarray*}It is straightforward to see that $\mathbf{Q}_0$ satisfies Conditions C$1$, C$2$, C$3$ of Definition \ref{def-MAPDA}. We then focus on Condition C$4$ of Definition \ref{def-MAPDA}.
	By Condition C$3$ of Definition \ref{def-PDA}, in the   subarray of $\mathbf{Q}$ including the rows and columns containing $s$, each row only contains one $s$ while the other elements in this row is $*$. So in the   subarray of $\mathbf{Q}_0$ including the rows and columns containing $s$,  the number of  $s$ in each row is $m$ while the other elements in this row is $*$. Hence, $\mathbf{Q}_0$ satisfies Condition C$4$ of Definition \ref{def-MAPDA}; i.e., $\mathbf{Q}_0$ is an $mg$-$(L,mK_1,F_1,Z_1,S_1)$ MAPDA.
	When $m=L$, the sum-DoF  of $\mathbf{Q}_0$ is $mg=L+m(g-1)$, which is the maximum sum-DoF in Theorem~\ref{th-MAPDA-CCS}.
	In the rest of this section, we only need to consider the case $m<L$.
	
	When $m<L$, we have $mg<L+m(g-1)$, i.e., the achieved sum-DoF $mg$ of $\mathbf{Q}_0$ is less than our task sum-DoF $L+m(g-1)$. In order to increase the sum-DoF by $L+m(g-1)-mg=L-m$,  further steps are taken. In short, we first generate an $mg$-$(L,mK_1,glF_1,glZ_1,glS_1)$ MAPDA $\mathbf{P}_1$ by replicating $\mathbf{Q}_0$   $gl$ times vertically,  where $l=\frac{m}{\text{gcd}(L-m,m)}$,  and adjusting some integers appropriately. Then we generate another  $(L-m)$-$(L$, $mK_1, \frac{(L-m)l}{m}F_1, \frac{(L-m)l}{m}Z_1, glS_1)$ MAPDA $\mathbf{P}_2$ based on $\mathbf{Q}$. The set of integers in $\mathbf{P}_2$  is the same as that in $\mathbf{P}_1$.
	Finally $\mathbf{P}$ is obtained by concatenating $\mathbf{P}_1$ and $\mathbf{P}_2$ vertically.

	In the following,  we introduce the constructions of $\mathbf{P}_1$, $\mathbf{P}_2$ and $\mathbf{P}$ in details.

	\subsection{Construction of an MAPDA $\mathbf{P}_1$}
	\label{sub-P1}
	In order to make the  set of integers in $\mathbf{P}_1$ the same as that in $\mathbf{P}_2$ constructed in the following subsection,
	we construct $\mathbf{P}_1$ by replicating $\mathbf{Q}_0$   $gl$ times vertically and then increasing the integers in $\mathbf{Q}_0$ by the occurrence orders (from up to down) of $\mathbf{Q}_0$; in other words, the $glF_1\times mK_1$ array $\mathbf{P}_1$ is defined as
	\begin{eqnarray}
		\label{eq-P-Q_0}
		\mathbf{P}_1=
		\left(
		\begin{array}{c}
			\mathbf{Q}_0\\
			\mathbf{Q}_0+S_1\\
			\vdots\\
			\mathbf{Q}_0+(gl-1)S_1
		\end{array}
		\right)
	\end{eqnarray}
	where $l=\frac{m}{\text{gcd}(L-m,m)}$.\footnote{\label{foot;value of l}
	By selecting this  value of $l$,  the vertical replication step in the construction of $\mathbf{P}_2$ contains a replication of integer times.
	More details will be explained later.}
	$\mathbf{P}_1$ contains totally $glS_1$ different integers, and each column of  $\mathbf{P}_1$  has $glZ_1$ stars. This is because, each array $\mathbf{Q}_0+jS_1$ where $j\in [0:gl-1]$ contains  $S_1$ different integers and each of its columns has $Z_1$ stars. Thus Conditions C$1$ and C$2$ of Definition \ref{def-MAPDA} hold.  Furthermore, Conditions C$3$ and C$4$ of Definition \ref{def-MAPDA} hold since each array $\mathbf{Q}_0+jS_1$ satisfies C$3$ and C$4$. So $\mathbf{P}_1$ is an $mg$-$(L,mK_1,glF_1,glZ_1,glS_1)$ MAPDA, with the set of integers  equal to  $[glF_1]$ where each integer occurs $mg$ times.

	When $L=3$ and $m=2$, we have $L-m=1$ and $l=\frac{m}{\text{gcd}(L-m,m)}=\frac{2}{\text{gcd}(1,2)}=2$. By replicating the $m$-$(K_1,F_1,Z_1,S_1)=3$-$(4,6,3,4)$ PDA $\mathbf{Q}$ listed in Fig. \ref{fig-ex1}   twice horizontally, we have the $6\times 8$ array $\mathbf{Q}_0$. Since $gl=6$, we obtain
	\begin{eqnarray*}
	\mathbf{P}_1=
	\left(
	\begin{array}{c}
		\mathbf{Q}_0\\
		\mathbf{Q}_0+4\\
		\mathbf{Q}_0+8\\
		\mathbf{Q}_0+12\\
		\mathbf{Q}_0+16\\
		\mathbf{Q}_0+20
	\end{array}
	\right),
	\end{eqnarray*}   as shown in Fig. \ref{fig-ex1}.

	\subsection{Construction of an MAPDA $\mathbf{P}_2$}\label{sub-P2}
	In the following we will construct another MAPDA $\mathbf{P}_2$ with the set of integers $[glS_1]$, where each integer appears exactly $L-m$ times.
	Hence, if we concatenate  $\mathbf{P}_1$ and  $\mathbf{P}_2$ vertically, each integer occurs $mg+L-m=L+m(g-1)$ times, coinciding with the achieved sum-DoF in Theorem~\ref{th-MAPDA-regular}.
	The main idea of constructing $\mathbf{P}_2$ is that we first replicate $\mathbf{Q}$ vertically with   integer right-shifting to obtain a new array $\mathbf{U}$, then replicate it horizontally to obtain a new array $\mathbf{U}_0$, and finally increase some integers in $\mathbf{U}_0$ to obtain $\mathbf{P}_2$.
	\subsubsection{Vertical replication with   integer right-shifting}
	\label{subsub-vertial}
	With the choice $l=\frac{m}{\text{gcd}(L-m,m)}$, we have that $\frac{l(L-m)}{m}$ is an integer.
	Assume that \begin{eqnarray*}
	\mathbf{Q}=
	\left(
	\begin{array}{c}
		\mathbf{q}_1\\
		\mathbf{q}_2\\
		\vdots\\
		\mathbf{q}_{F_1}
	\end{array}
	\right),
	\end{eqnarray*}
	where $\mathbf{q}_j$ represents the $j^{\text{th}}$ row of $\mathbf{Q}$, for each $j\in [F_1]$. We replicate $\mathbf{q}_j$, $\frac{l(L-m)}{m}$ times vertically, to obtain \begin{eqnarray*}
	\mathbf{A}_j'=
	\left(
	\begin{array}{c}
		\mathbf{q}_j\\
		\mathbf{q}_j\\
		\vdots\\
		\mathbf{q}_j
	\end{array}
	\right),
	\end{eqnarray*}
	with dimension $\frac{l(L-m)}{m} \times K_1$.
	Then we  obtain an $\frac{l(L-m)}{m}F_1\times K_1$ array $\mathbf{U}'$ by concatenating $\mathbf{A}_1', \mathbf{A}_2',\ldots, \mathbf{A}_{F_1}'$ vertically; i.e.,
	\begin{eqnarray*}
	\mathbf{U}'=
	\left(
	\begin{array}{c}
		\mathbf{A}_1'\\
		\mathbf{A}_2'\\
		\vdots\\
		\mathbf{A}_{F_1}'
	\end{array}
	\right).
	\end{eqnarray*}
	We can see that in $\mathbf{U}'$ each integer occurs $g\frac{l(L-m)}{m}$ times. In addition, each integer in each column of $\mathbf{U}'$    occurs $\frac{l(L-m)}{m}$ times, which violates Condition C$3$ of Definition \ref{def-MAPDA}.
	So we need to adjust the integers in $\mathbf{U}'$ to meet Condition C$3$ of Definition \ref{def-MAPDA}.

	We then generate a new array   $\mathbf{A}_j$ with dimension $\frac{l(L-m)}{m} \times K_1$,  by cyclically right-shifting the integers in $i^{\text{th}}$ row of $\mathbf{A}_j$ by $i$ positions while the  stars in this row remain unchanged, for all $i\in \left[\frac{l(L-m)}{m}\right]$.

	Since for any two different rows of $\mathbf{A}'_j$, we right-shift the same integer by different positions, then the following property is obtained.

	{\bf Property 1:} Each integer in $\mathbf{A}_j$ must occur in $\frac{l(L-m)}{m}$ different columns of $\mathbf{A}_j$. 		\hfill $\square$

	Next we generate an array $\mathbf{U}$ by vertically concatenating  $\mathbf{A}_1, \mathbf{A}_2, \ldots, \mathbf{A}_{F_1}$, whose dimension is $\frac{l(L-m)}{m}F_1\times K_1$; i.e.,
	\begin{eqnarray*}
	\mathbf{U}=
	\left(
	\begin{array}{c}
		\mathbf{A}_1\\
		\mathbf{A}_2\\
		\vdots\\
		\mathbf{A}_{F_1}
	\end{array}
	\right).
	\end{eqnarray*}
	Since $\mathbf{U}$ is obtained from $\mathbf{U}'$ by some right-shifting steps, each integer in $\mathbf{U}$ occurs the same times (i.e., $g\frac{l(L-m)}{m}$ times) as in $\mathbf{U}'$.

	Let us return to Example \ref{ex-3}. Recall that $\mathbf{Q}$ is a $3$-$(4,6,3,4)$ PDA and $L=3$, $m=2$ and $l=2$. We have $\frac{l(L-m)}{m}=\frac{2\times 1}{2}=1$.
	So in this example, we have   $\mathbf{U}'=\mathbf{Q}$.
 	Let us take $\mathbf{A}'_1=(*\ *\ 1\ 2)$, which is the first row of $\mathbf{U}'$,  to show our shifting method. We cyclically right-shift integers $1$ and $2$ to the next integer positions $(1,4)$ and $(1,3)$ respectively, to obtain the subarray $\mathbf{A}_1=(*\ *\ 2\ 1)$. Similarly we shift all the integers in other rows and obtain
	\begin{eqnarray}\label{array-U}
	\mathbf{U}=
	\left(
	\begin{array}{cccc}
		* & * & 2 & 1 \\
		* & 3 & * & 1 \\
		* & 3 & 2 & * \\
		4 & * & * & 1 \\
		4 & * & 2 & * \\
		4 & 3 & * & *
	\end{array}
	\right) =\left(
	\begin{array}{c}
		\mathbf{A}_1\\
		\mathbf{A}_2\\
		\vdots\\
		\mathbf{A}_{6}
	\end{array}\right),
	\end{eqnarray} as shown in Fig. \ref{fig-ex1}.
	It can be seen that each integer appears once and in one column in $\mathbf{A}_j$ where $j\in[6]$.

	\subsubsection{Horizontal replication}\label{subsub-horizontal}
	We then replicate $\mathbf{U}$   $m$ times horizontally to get an   array $\mathbf{U}_0$, whose dimension is $\frac{l(L-m)}{m}F_1\times mK_1$.  It can be checked that each integer occurs   $mg\frac{l(L-m)}{m}=gl(L-m)$ times in $\mathbf{U}_0$.
	For each $j\in[F_1]$, we define $\mathbf{B}_j$ as  a subarray of   $\mathbf{U}_0$,  which is  composed of the rows indexed by $(j-1)\frac{l(L-m)}{m}+1,(j-1)\frac{l(L-m)}{m}+2,\ldots,j\frac{l(L-m)}{m}$ in   $\mathbf{U}_0$.
	Thus  $\mathbf{B}_j$ is with the form,
	\begin{align*}
		\mathbf{B}_j=\left( \underbrace{\mathbf{A}_j,\ldots,\mathbf{A}_j}_{m}\right),
	\end{align*}
	and  $\mathbf{U}_0$ is with the form,
	\begin{eqnarray*}
	\mathbf{U}_0=
	\left(
	\begin{array}{c}
		\mathbf{B}_1\\
		\mathbf{B}_2\\
		\vdots\\
		\mathbf{B}_{F_1}
	\end{array}
	\right).
	\end{eqnarray*}

	By {\bf Property 1} the following property can be obtained directly.

	{\bf Property 2:} Each integer in $\mathbf{B}_j$ occurs in $m\frac{l(L-m)}{m}=l(L-m)$ distinct columns of $\mathbf{B}_j$. 		\hfill $\square$

	Let us go back to Example \ref{ex-3}.   From \eqref{array-U},  we have
	\begin{eqnarray}\label{array-U0}
	\mathbf{U}_0=
	\left(
	\begin{array}{cccc|cccc}
		* & * & 2 & 1 & * & * & 2 & 1\\
		* & 3 & * & 1 & * & 3 & * & 1\\
		* & 3 & 2 & * & * & 3 & 2 & *\\
		4 & * & * & 1 & 4 & * & * & 1\\
		4 & * & 2 & * & 4 & * & 2 & *\\
		4 & 3 & * & * & 4 & 3 & * & *
	\end{array}
	\right) =\left(
	\begin{array}{c}
		\mathbf{B}_1\\
		\mathbf{B}_2\\
		\vdots\\
		\mathbf{B}_{6}
	\end{array}
	\right)=
	\left(
	\begin{array}{c|c}
		\mathbf{A}_1&\mathbf{A}_1\\
		\mathbf{A}_2&\mathbf{A}_2\\
		\vdots&\vdots\\
		\mathbf{A}_6&\mathbf{A}_6
	\end{array}
	\right),
	\end{eqnarray}
	as shown in $\mathbf{U}_0$ of Fig. \ref{fig-ex1}.
	The subarray
	$\mathbf{B}_1=(
	\begin{array}{cccccccc}
		*&*&2&1&*&*&2&1
	\end{array}
	)$
	is the first row  of $\mathbf{U}_0$, can be regraded as replicating
	$\mathbf{A}_1=
	(
	\begin{array}{cccc}
		*&*&2&1
	\end{array}
	)$
 	twice horizontally, and each integer appears $l(L-m)=2$ distinct columns in $\mathbf{B}_1$.

	\subsubsection{Increase of  the integers in $\mathbf{U}_0$}
	\label{subsub-arrange}
	In $\mathbf{U}_0$, the set of integers is $[S_1]$, while each integer occurs $gl(L-m)$ times. The objective of this step is to increase the integers in $\mathbf{U}_0$, such that the  set  of integers becomes $[glS_1]$ and each integer occurs $L-m$ times.

	For each integer  $s\in [S_1]$ in $\mathbf{U}_0$, we sort its  $gl(L-m)$ replicas   from left to right and top to bottom in $\mathbf{U}_0$.
	We increase the $i^{\text{th}}$ replica  where $i\in [gl(L-m)]$ by $\lfloor\frac{i-1}{L-m}\rfloor S_1$; i.e., the $i^{\text{th}}$ replica now becomes $s+\lfloor\frac{i-1}{L-m}\rfloor S_1$. Hence, all the replicas  with the order in $[(j-1)(L-m)+1:j(L-m)]$ are replaced by integer $s+(j-1) S_1$, where $j\in [gl]$.
 	Thus we introduce $gl$ integers to replace the replicas of  $s$  in $\mathbf{U}_0$, where each integer occurs $L-m$  times.

	After considering all integers  $s\in [S_1]$ and increasing the replicas of the integers  as described above, the resulting array is  $\mathbf{P}_2$, which totally contains $glS_1$ different integers. In other words, the set of integers in   $\mathbf{P}_2$ is $[glS_1]$, which    is the same as that in $\mathbf{P}_1$.   Thus Condition C$2$ of Definition \ref{def-MAPDA} holds. Furthermore, recall that in our vertical replication step the number of times to replicate $\mathbf{Q}$ is $\frac{(L-m)l}{m}$,  and that the right-shifting step does not change the positions of stars.
	Hence, there are $\frac{(L-m)l}{m}Z_1$ stars in each column of $\mathbf{P}_2$. Then Condition C$1$ of Definition \ref{def-MAPDA} holds.

	Now let us consider Condition C$3$ of Definition \ref{def-MAPDA}. Let $\mathbf{D}_j$ be the row-wise subarray of $\mathbf{P}_2$, which is composed of the rows indexed by $(j-1)\frac{l(L-m)}{m}+1,(j-1)\frac{l(L-m)}{m}+2,\ldots, j\frac{l(L-m)}{m}$. Then $\mathbf{P}_2$ can be represented as
	\begin{eqnarray*}
	\mathbf{P}_2=
	\left(
		\begin{array}{c}
			\mathbf{D}_1\\
			\mathbf{D}_2\\
			\vdots\\
			\mathbf{D}_{F_1}
		\end{array}
	\right).
	\end{eqnarray*}
 	For each $j\in [F_1]$, since $\mathbf{D}_j$ is obtained by increasing the integers in $\mathbf{B}_j$, and each integer in $\mathbf{B}_j$ occurs   $l(L-m)$ times in $\mathbf{B}_j$, then each integer in $\mathbf{D}_j$ occurs   $L-m$ times in $\mathbf{D}_j$.
	This is because, to obtain $\mathbf{D}_j$ from $\mathbf{B}_j$, we replace the $l(L-m)$  replicas of each integer in  $\mathbf{B}_j$ by
	$l$ different integers in $\mathbf{D}_j$; in other words, each  consecutive $L-m$ replicas  from left to right and top to bottom of each integer in  $\mathbf{B}_j$ are replaced by the same integer in $\mathbf{D}_j$.

	Recall that  each integer in $\mathbf{P}_2$ occurs $L-m$ times in $\mathbf{P}_2$. Hence, each integer in $\mathbf{D}_j$ does not occur in  $\mathbf{D}_{j_1}$ where $j_1 \in [F_1]\setminus\{j\}$. Together with   {\bf Property 2}, it can be seen that each integer in $\mathbf{D}_j$ occurs
    in $L-m$ distinct columns of $\mathbf{D}_j$ . Then Condition C$3$ of Definition \ref{def-MAPDA} holds.

 	Finally let us consider Condition C$4$ of Definition \ref{def-MAPDA}. Since the subarray $\mathbf{P}_2^{(s)}$ of $\mathbf{P}_2$ including the rows and columns containing $s\in[glS_1]$ has $L-m$ columns which is less than $L$, then C$4$ holds. Thus $\mathbf{P}_2$ is an $(L-m)$-$(L$, $mK_1$, $\frac{(L-m)l}{m}F_1$, $\frac{(L-m)l}{m}Z_1$, $glS_1)$ MAPDA.

	Let us see Example \ref{ex-3} again. From \eqref{array-U0}, each integer appears  $gl(L-m)=6$ times in $\mathbf{U}_0$. Since $L-m=1$ we replace integer $s=1$ in positions $(1,4)$, $(1,8)$, $(2,4)$, $(2,8)$, $(4,4)$, $(4,8)$ of $\mathbf{U}_0$ by
	$$
	\begin{array}{ll}
		1+\left\lfloor\frac{1-1}{1} \right\rfloor \times 4=1+0\times 4=1,&
		1+\left\lfloor\frac{2-1}{1} \right\rfloor \times 4=1+1\times 4=5,\\
		1+\left\lfloor\frac{3-1}{1} \right\rfloor \times 4=1+2\times 4=9,&
		1+\left\lfloor\frac{4-1}{1} \right\rfloor \times 4=1+3\times 4=13,\\
		1+\left\lfloor\frac{5-1}{1} \right\rfloor \times 4=1+4\times 4=17,&
		1+\left\lfloor\frac{6-1}{1} \right\rfloor \times 4=1+4\times4=21.
	\end{array}
	$$ respectively. Then we have
	$$
	\begin{array}{lll}
		\mathbf{P}_2(1,4)=1,& \mathbf{P}_2(1,8)=5,&  \mathbf{P}_2(2,4)=9,\\
		\mathbf{P}_2(2,8)=13, & \mathbf{P}_2(4,4)=17,&  \mathbf{P}_2(4,8)=21.
	\end{array}
	$$ Similarly after replacing all the other integers $s=2$, $3$ and $4$, we get
	\begin{eqnarray}\label{array-P2}
	\mathbf{P}_2=
	\left(
	\begin{array}{cccc|cccc}
		*    & *    & 2    & 1    & *    & *    & 6  & 5 \\
		*    & 3    & *    & 9  & *    & 7  & *    & 13\\
		*    & 11  & 10 & *    & *    & 15 & 14 & *   \\
		4    & *    & *    & 17 & 8  & *    & *    & 21\\
		12  & *    & 18 & *    & 16 & *    & 22 & *   \\
		20 & 19 & *    & *    & 24 & 23 & *    & *
	\end{array}
	\right)
	=\left(
		\begin{array}{c}
			\mathbf{D}_1\\
			\mathbf{D}_2\\
			\vdots\\
			\mathbf{D}_{6}
		\end{array}\right)
	\end{eqnarray}
	which is shown in Fig. \ref{fig-ex1}. The subarray
	$\mathbf{D}_1=(
	\begin{array}{cccccccc}
		*& *& 2& 1& *& *& 6  & 5
	\end{array}
	)$
	is the first row  of $\mathbf{P}_2$, can be regraded as increasing the integers in
	$\mathbf{B}_1=
	(
	\begin{array}{cccccccc}
		*&*&2&1&*&*&2&1
	\end{array}
	)$, and each integer only appears in one column of $\mathbf{D}_1$.We can see that each integer occurs exactly once in $\mathbf{P}_2$.

	\subsection{Construction of an MAPDA $\mathbf{P}$}
	The last step is to obtain an array $\mathbf{P}$ by concatenating $\mathbf{P}_1$ and $\mathbf{P}_2$ vertically, i.e., $\mathbf{P}=[\mathbf{P}_1;\mathbf{P}_2]$. The  set of integers in $\mathbf{P}$ is $[glS_1]$. The numbers of rows and columns in $\mathbf{P}$ are $F=(g+\frac{L-m}{m})lF_1=\alpha F_1$ and $K=mK_1$, respectively, where $\alpha=(g+\frac{L-m}{m})l$ as defined in Theorem \ref{th-MAPDA-regular}.

	Next we will show that $\mathbf{P}$ is an $(L+m(g-1))$-$(L,mK_1$, $\alpha F_1$, $\alpha Z_1$, $glS_1)$ MAPDA. Since there are $glZ_1$ and $\frac{l(L-m)}{m}Z_1$ stars in each column of $\mathbf{P}_1$ and $\mathbf{P}_2$ respectively, then the number of stars in each column of $\mathbf{P}$ is $Z=(g+\frac{L-m}{m})lZ_1=\alpha Z_1$. Thus Condition C$1$ of Definition \ref{def-MAPDA} holds. In other words, each user caches $\frac{ ZN}{F}=\frac{\alpha Z_1N}{\alpha F_1}=\frac{Z_1N}{F_1}=M$ files, which satisfies the memory size constraint. Furthermore, since each integer $s\in [glS_1]$ appears $mg$ times and $L-m$ times in $\mathbf{P}_1$ and $\mathbf{P}_2$ respectively, then each integer occurs in $\mathbf{P}$ exactly $mg+L-m=m(g-1)+L$ times. So Condition C$2$ of Definition \ref{def-MAPDA} holds.

	In order to verify the Conditions C3 and C4 of Definition \ref{def-MAPDA}, the  following Lemma \ref{lem-different-column-times}  is useful, whose proof is included in Appendix \ref{proof-lem-4}.
	\begin{lemma}
		\label{lem-different-column-times}
		By our construction, the following statements hold:
	\begin{itemize}	
	\item Each integer $s\in[glS_1]$ occurs in $mg+L-m=m(g-1)+L$ distinct columns in $\mathbf{P}$.
	\item
 	In $\mathbf{P}^{(s)}_1$, which is the subarray of  $\mathbf{P}_1$ including the rows and columns containing $s$, each row has exactly $m(g-1)$ stars.
	\item For any row of $\mathbf{P}_2$ containing $s$, there must exist a row of $\mathbf{P}_1$ containing $s$, where  the   stars are located at the same positions in these two rows.
	\end{itemize}
		\hfill $\square$
	\end{lemma}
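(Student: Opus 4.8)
The plan is to prove the three bullets in the order stated, since the first one uses the second. For the \emph{second} bullet, note from \eqref{eq-P-Q_0} that $\mathbf{P}_1$ is a vertical stack of integer-shifted copies of $\mathbf{Q}_0=(\mathbf{Q},\ldots,\mathbf{Q})$, so every $s\in[glS_1]$ belongs to exactly one such copy and is the relabeling of a unique integer $s'\in[S_1]$ of $\mathbf{Q}$; hence $\mathbf{P}^{(s)}_1$ has the same pattern of stars and integer entries as $\mathbf{Q}_0^{(s')}$. Since $\mathbf{Q}$ is $g$-regular, Condition C$3$ of Definition~\ref{def-PDA} tells us that $s'$ occupies $g$ rows and a set $C_{s'}$ of $g$ columns of $\mathbf{Q}$, and that in any row $\mathbf{q}_j$ containing $s'$ --- say in column $c\in C_{s'}$ --- every other column $c'\in C_{s'}$ carries a star: this is precisely C$3$b applied to the two occurrences of $s'$, in rows $j$ and $j''$ and columns $c$ and $c'$. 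Thus the restriction of $\mathbf{q}_j$ to the columns of $C_{s'}$ consists of a single $s'$ together with $g-1$ stars; replicating $\mathbf{Q}$ horizontally $m$ times then shows that every row of $\mathbf{Q}_0^{(s')}$, and hence of $\mathbf{P}^{(s)}_1$, contains $m$ copies of $s$ and exactly $m(g-1)$ stars, and that $s$ lies in $mg$ distinct columns of $\mathbf{P}_1$.

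For the \emph{first} bullet it suffices to prove that the $mg+(L-m)$ occurrences of $s$ in $\mathbf{P}=[\mathbf{P}_1;\mathbf{P}_2]$ all lie in distinct columns. Those inside $\mathbf{P}_1$ are in distinct columns by the previous paragraph, and each of them lies in one of the columns $cK_1+k$ with $c\in\{0,\ldots,m-1\}$ and $k\in C_{s'}$. By the discussion in Section~\ref{subsub-arrange}, the $L-m$ occurrences of $s$ in $\mathbf{P}_2$ all lie in a single block $\mathbf{D}_j$, and, using \textbf{Property~2}, they occupy $L-m$ distinct columns, each of the form $cK_1+k$ with $c\in\{0,\ldots,m-1\}$ and $k\in D_{s'}$, where $D_{s'}\subseteq[K_1]$ is the set of columns of $\mathbf{U}$ in which $s'$ appears. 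Hence it remains to show $C_{s'}\cap D_{s'}=\emptyset$, and this is exactly where the cyclic right-shift of Section~\ref{subsub-vertial} is essential: in each of the $g$ rows $\mathbf{q}_j$ that contain $s'$, the column $c$ holding $s'$ is, by the C$3$b observation above, the only column of $C_{s'}$ that carries an integer entry of that row, so a nonzero cyclic right-shift of the integer entries of $\mathbf{q}_j$ carries $s'$ to an integer position outside $C_{s'}$; running over all $g$ such rows yields $D_{s'}\cap C_{s'}=\emptyset$, and the two column sets above are therefore disjoint.

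For the \emph{third} bullet the key observation is that none of the operations used to build $\mathbf{P}_1$ and $\mathbf{P}_2$ --- horizontal and vertical replication, cyclic right-shifting of integer entries, and the addition of multiples of $S_1$ --- ever moves a star. Consequently a row of $\mathbf{P}_2$ containing $s$ is, up to the relabeling of integers, the $m$-fold horizontal replication of a cyclically shifted copy of some row $\mathbf{q}_j$ of $\mathbf{Q}$ that contains $s'$, and its star positions coincide with those of $(\mathbf{q}_j,\ldots,\mathbf{q}_j)$. Writing $s=s'+\kappa S_1$ with $\kappa\in\{0,\ldots,gl-1\}$, the block $\mathbf{Q}_0+\kappa S_1$ of $\mathbf{P}_1$ carries the relabeled image of \emph{every} occurrence of $s'$ in $\mathbf{Q}_0$; in particular the row of $\mathbf{Q}_0+\kappa S_1$ coming from $\mathbf{q}_j$ contains $s'+\kappa S_1=s$ and has exactly the same star positions as the chosen row of $\mathbf{P}_2$.

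The main obstacle will be the disjointness $C_{s'}\cap D_{s'}=\emptyset$ needed in the first bullet: this is the one step where the precise cyclic right-shifting of Section~\ref{subsub-vertial}, as opposed to an arbitrary duplication of rows, is indispensable, and it has to be extracted from Condition C$3$b of Definition~\ref{def-PDA} rather than obtained by direct computation. A secondary point requiring care is that the right-shift must genuinely displace $s'$ in each of the $g$ rows that contain it, which one verifies against the admissible shift amounts $i\in\{1,\ldots,l(L-m)/m\}$ together with the structure of the base $g$-regular PDA.
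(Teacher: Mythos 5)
Your proof is correct and follows essentially the same route as the paper's: the second bullet via the block structure of $\mathbf{P}_1$ and Condition C$3$b of Definition~\ref{def-PDA}, the column-disjointness in the first bullet via the observation that the cyclic right-shift must carry $s'$ to an integer position of its row and hence off the column set $C_{s'}$ (which is all stars in that row except for $s'$ itself), and the third bullet via the star-preserving nature of every operation together with the relabeling arithmetic $s = s' + \kappa S_1$. The one point you flag at the end --- that the admissible shift amounts $i\in\left[l(L-m)/m\right]$ must genuinely displace each integer within its row rather than cycling it back to its original position --- is likewise left implicit in the paper's own proof, so your treatment matches its level of rigor.
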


	From the first statement of Lemma \ref{lem-different-column-times}, $\mathbf{P}$ satisfies Condition C$3$ of Definition \ref{def-MAPDA}.

	Finally, let us consider Condition C$4$ of Definition \ref{def-MAPDA}. Denote the column index sets of columns containing $s$ of $\mathbf{P}$, $\mathbf{P}_1$ and $\mathbf{P}_2$ by $\mathcal{K}^{(s)}$, $\mathcal{K}^{(s)}_1$ and, $\mathcal{K}^{(s)}_2$ respectively. By the first statement of Lemma \ref{lem-different-column-times}, the subarray $\mathbf{P}^{(s)}$ including the rows and columns of $\mathbf{P}$ containing $s$ has $m(g-1)+L$ columns. This implies that $\mathcal{K}^{(s)}=\mathcal{K}^{(s)}_1\cup \mathcal{K}^{(s)}_2$ and $\mathcal{K}^{(s)}_1\cap \mathcal{K}^{(s)}_2=\emptyset$. By the second and third statements of  Lemma \ref{lem-different-column-times}, the number of stars in each row of $\mathbf{P}^{(s)}$ is at least $m(g-1)$. So the number of integer entries in each row of $\mathbf{P}^{(s)}$ is at most
	$L+m(g-1)-m(g-1)=L$. Then Condition C$4$ of Definition \ref{def-MAPDA} holds.

 	From the above discussions, when $m<L$, $\mathbf{P}$ is an ($m(g-1)+L)$-$(L,K=mK_1,\alpha F_1,\alpha Z_1,glS_1)$ MAPDA and  $\text{sgn}(g)=g$.
 	In conclusion,  the proof of Theorem~\ref{th-MAPDA-regular} is completed.
			
	\section{Conclusion}\label{sec-conclu}
	In this paper, we studied  the cache-aided MISO broadcast channel  problem with one-shot linear delivery. We first presented a  new design construction, referred to as  MAPDA,  to characterize the placement and delivery phases.
	For the system with parameters satisfying $KM/N+L=K$, we proposed a  scheme  under MAPDA  to achieve the maximum sum-DoF with subpacketization equals to $K$. For the general case, we proposed another scheme by   a  non-trivial transformation approach from any regular PDA for the original caching problem. If the original PDA is the MN PDA, the resulting scheme can achieve maximum sum-DoF with lower subpacketization than the existing schemes.

	\begin{appendices}
	\section{proof of Theorem \ref{th-DoF}}\label{proof-DoF}
		
	\begin{proof}
		Assume that each integer $s\in[S]$ occurs $r_s$ times in $\mathbf{P}$, denoted by $\mathbf{P}(f_1,k_1)$, $\mathbf{P}(f_2,k_2)$, $\ldots$, $\mathbf{P}(f_{r_s},k_{r_s})$. We can obtain the subarray $\mathbf{P}^{(s)}$ with $r_s$ columns from Condition C$3$ of Definition \ref{def-MAPDA}, i.e., each integer occurs in each column at most once,
		and  let $f_i$ and $k_i$, $i\in [r_s]$ represent the row indices and column indices of $\mathbf{P}^{(s)}$, respectively. For each subarray $\mathbf{P}^{(s)}$, $s\in[S]$, we assume that there are $r_{s,i}$ integer entries in the row $f_i$. Then the number of stars used by all the integer $s$'s in $\mathbf{P}^{(s)}$ is exactly $\sum_{i=1}^{r_s}(r_s-r_{s,i})$, and the total number of stars used in all $\mathbf{P}^{(s)}$, $s\in [S]$, is
		\begin{eqnarray*}
			M=\sum_{s=1}^S\sum_{i=1}^{r_s}(r_s-r_{s,i})=\sum_{s=1}^Sr_s^2-\sum_{s=1}^S\sum_{i=1}^{r_s}r_{s,i}.
		\end{eqnarray*}
		Next, we consider the array $\mathbf{P}$ and assume that each row $j\in [F]$ has $r'_j$ integer entries, then the times of all stars used by the integer entries in $j^{\text{th}}$ row is at most $r'_j(K-r'_j)$. So the total times of all stars used in $\mathbf{P}$ is at most $M'=\sum_{j=1}^Fr'_j(K-r'_j)$. Clearly, $M \leq M'$, i.e.,
		\begin{eqnarray}\label{num-star}
			\sum_{s=1}^Sr_s^2-\sum_{s=1}^S\sum_{i=1}^{r_s}r_{s,i} \leq \sum_{j=1}^Fr'_j(K-r'_j)\label{1}.
		\end{eqnarray}
		Since $n=(F-Z)K$ is the total number of integers in $\mathbf{P}$, we have $n=\sum_{s=1}^Sr_s =\sum_{i=j}^F r'_j$. From \eqref{num-star}, we get
		\begin{eqnarray*}
			\sum_{s=1}^{S}(r_s)^2 +\sum_{j=1}^{F}(r'_j)^2 \leq \sum_{s=1}^{S}\sum_{i=1}^{r_s}r_{s,i} +\sum_{j=1}^{F}Kr'_j.
		\end{eqnarray*}
		In addition, by the convexity and
		$\sum_{i=1}^{r_s}r_{s,i}\leq r_sL$ from Condition C$4$ of Definition \ref{def-MAPDA}, we can obtain
		$$\sum_{s=1}^{S}(r_s)^2\geq \frac{1}{S}(\sum_{s=1}^{S}r_s)^2=\frac{n^2}{S},\ \ \ \ \sum_{s=1}^{F}(r'_j)^2\geq \frac{1}{F}(\sum_{s=1}^{S}r'_j)^2=\frac{n^2}{F}.$$
		Then
		\begin{eqnarray*}
			\frac{n^2}{S}+\frac{n^2}{F}\leq \sum_{s=1}^{S}r_sL+Kn = nL+Kn,
		\end{eqnarray*} i.e., $S\geq \frac{nF}{FL+KF-n}$.
		Thus we get the sum-DoF of $\frac{K(F-Z)}{S}\leq \frac{FL+KZ}{F}=\frac{KZ}{F}+L$, where the equation holds if and only if $r_{s,i}=L$, $r_1=r_2=\cdots=r_{S}=\frac{n}{S}$ and $r'_1=r'_2=\cdots=r'_F$. Then the proof is completed.
	\end{proof}

	\section{proof of Lemma \ref{lem-different-column-times}}\label{proof-lem-4}
	Let us consider the second statement first. From \eqref{eq-P-Q_0}, the rows of $\mathbf{P}_1$ containing $s$ are exactly the rows of the array $\mathbf{Q}_0+jS_1$ for some integer $j\in [0:gl-1]$. Since $\mathbf{Q}_0+jS_1$ is an $mg$-$(L,mK_1,F_1,Z_1,S_1)$ MAPDA, each row of $\mathbf{P}^{(s)}_1$ which is generated by the rows and columns of $\mathbf{P}_1$ containing $s$ has exactly $m(g-1)$ stars. So the second statement holds. Furthermore $\mathbf{P}^{(s)}_1$ has the following form
	\begin{eqnarray}
	\label{eq-P^s_1}
	\mathbf{P}_1^{(s)}=
	\left(
		\begin{array}{cccc|c|cccc}
			s      &*      &\cdots &*	   &\cdots &s      &*      &\cdots &*      \\
			*      &s      &\cdots &*	   &\cdots &*      &s      &\cdots &*      \\
			\vdots &\vdots &\ddots &\vdots &\cdots &\vdots &\vdots &\ddots &\vdots \\
			*      &*      &\cdots &s      &\cdots &*      &*      &\cdots &s      \\
		\end{array}
	\right)_{mg\times g}
	\end{eqnarray} with the row and column permutations.

	Now let us consider the third statement. For any integer $s\in[glS_1]$, from the construction of $\mathbf{P}_1$ in Section \ref{sub-P1} and the construction of $\mathbf{P}_2$ in Section \ref{sub-P2}, the integer $s$ at row $\mathbf{p}_{1,j_1}$ of $\mathbf{P}_1$ and at row $\mathbf{p}_{2,j_2}$ of $\mathbf{P}_2$ can be written as follows respectively where $j_1\in [glF_1]$ and $j_2\in [\frac{l(L-m)}{m}F_1]$.
	\begin{eqnarray*}
		s=s'+(h-1)S_1,\ \ \  s=s''+\left\lfloor\frac{i-1}{L-m}\right\rfloor S_1,\ \ \ h\in[gl],\ i\in[(L-m) gl],\ s',s''\in[S_1].
	\end{eqnarray*}Without loss of generality, we assume that $s'\geq s''$. Then we have
	$$S_1>s'-s''=\left\lfloor\frac{i-1}{L-m}\right\rfloor S_1-(h-1)S_1=\left(\left\lfloor\frac{i-1}{L-m}\right\rfloor-(h-1)\right)S_1.$$
	The above equality holds if and only if $s'=s''$ and $\left\lfloor\frac{i-1}{L-m}\right\rfloor=h-1$ hold. This implies that the row $\mathbf{p}_{2,j_2}$ have the same star positions as that of some rows $\mathbf{p}_{1,j_1}$. Then the second statement holds.

	Finally let us consider the first statement. Recall that $\mathbf{P}_2$ is generated through the step of cyclically-right-shifting the integers into the other integer positions and remaining the star positions in each row of $\mathbf{Q}$. From \eqref{eq-P^s_1} and second statement, the indices of the columns containing $s$ of $\mathbf{P}_2$ must different from the indices of the columns containing $s$ of $\mathbf{P}_1$. Then the first statement holds.
\end{appendices}

\bibliographystyle{IEEEtran}
\bibliography{reference}

\end{document}